\documentclass[10pt]{article} 
\usepackage[preprint]{tmlr}

\usepackage{amsmath,amsfonts,bm}

\def\eqref#1{equation~\ref{#1}}

\def\1{\bm{1}}

\DeclareMathAlphabet{\mathsfit}{\encodingdefault}{\sfdefault}{m}{sl}
\SetMathAlphabet{\mathsfit}{bold}{\encodingdefault}{\sfdefault}{bx}{n}

\usepackage{nicematrix}
\usepackage[pdftex, pdfstartview={FitV}, plainpages = false, colorlinks=true, linkcolor=black, citecolor = green!50!black, urlcolor = black,filecolor=black, pagebackref=false,hypertexnames=false, plainpages=false, pdfpagelabels ]{hyperref}
\usepackage{url}
\usepackage{graphicx}
\usepackage{float,enumerate}
\usepackage{tcolorbox}
\usepackage{booktabs}
\usepackage[disable]{todonotes}
\usepackage[table]{xcolor}
\usepackage{mathtools}
\usepackage{mathrsfs}
\newcommand{\ignore}[1]{}

\usepackage{tabularx}
\usepackage[inline]{enumitem}
\usepackage{multirow}
\usepackage{algorithm}
\usepackage{diagbox}
\usepackage{algpseudocode}
\usepackage{graphicx,wrapfig,lipsum}
\usepackage{amsthm}
\newtheorem{lemma}{Lemma}
\newtheorem{proposition}{Proposition}
\usepackage{enumitem}
\newtheorem{definition}{Definition}[section]

\newtheorem{corollary}{Corollary}[section]
\newtheorem{experiment}{Experiment}[section]
\DeclareMathOperator*{\argsup}{argsup}
\DeclareMathOperator*{\arginf}{arginf}
\usepackage{amsmath, amssymb}
\usepackage{tcolorbox}
\usepackage{tabularx}
\newtcolorbox{optbox}[1][]{
  colback=black!3,
  colframe=black!30,
  coltitle=black,
  colbacktitle=black!8,
  fonttitle=\bfseries,
  title=#1,
  boxrule=0.4pt,
  arc=1mm
}

\title{Beyond Bayesian Nash: Learning  Minimax-Regret Equilibria for Adversarial Team Games under Asymmetric Information}

\author{\name Naman Aggarwal \email namanagg@mit.edu \\
      \addr Aerospace Control Laboratory \\ Laboratory of Information and Decision Systems\\
      Massachusetts Institute of Technology
      \AND
      \name Jonathan P.\ How \email jhow@mit.edu \\
      \addr Aerospace Control Laboratory \\ Laboratory of Information and Decision Systems \\ Massachusetts Institute of Technology
    }

\def\month{MM}  
\def\year{YYYY} 
\def\openreview{\url{https://openreview.net/forum?id=XXXX}} 

\begin{document}
\maketitle
\begin{abstract}
Adversarial team games (ATGs) with asymmetric information, such as adversarial path-finding, goal search, and reachability games on graphs, require strategies that are robust to hidden opponent types, such as a hidden goal flag, and to deception. Under asymmetric information, deception is seen as strategic shifts in the type distribution such that the omniscient opponent can collude with Nature and condition its play on the observed type. Existing risk-neutral solution concepts, such as Bayesian Nash equilibrium (BNE), are sensitive to distribution shifts, while distributionally robust approaches provide guarantees only within a prescribed ambiguity set. To address these limitations, we introduce Probabilistically Robust Minimax-Regret Equilibrium (PR-MRE), a novel equilibrium concept that combines the distribution-free robustness of minimax-regret reasoning with probabilistic information from a nominal type distribution. PR-MRE minimizes worst-case regret over a high-confidence subset of the type space, providing protection against strategic redistribution of probability mass while avoiding the conservatism of fully distribution-free approaches. 
We show that, for normal-form Bayesian games, PR-MRE can be formulated as a robust bilinear program and derive a tractable semidefinite relaxation. We then adapt this relaxation into a novel meta-solver within a robust double-oracle framework, PRMRE-PSRO, enabling population-based learning of approximate PR-MRE strategies via deep reinforcement learning best responses.
Experiments on graph-structured adversarial team games demonstrate that PR-MRE discovers strategies with substantially improved worst-case performance across hidden types compared to risk-neutral equilibrium solutions, resulting in more robust behavior under strategic distribution shifts.
\end{abstract}

\section{Introduction}
Computation of equilibria in large imperfect-information games has been a central driver of recent advances in artificial intelligence. A key challenge in such settings is determining how strategic agents should reason about uncertainty. Existing equilibrium concepts differ fundamentally in their treatment of risk and robustness, ranging from risk-neutral formulations that optimize expected performance to risk-averse approaches that guard against uncertainty through worst-case reasoning. These differing notions of rationality lead to markedly different behaviors and robustness guarantees. Consequently, the literature on equilibria and equilibrium refinements for imperfect-information games is extensive (see Table~\ref{tab:taxonomy} for an overview).
\begin{figure}[t]
  \centering
  \includegraphics[scale=0.55]{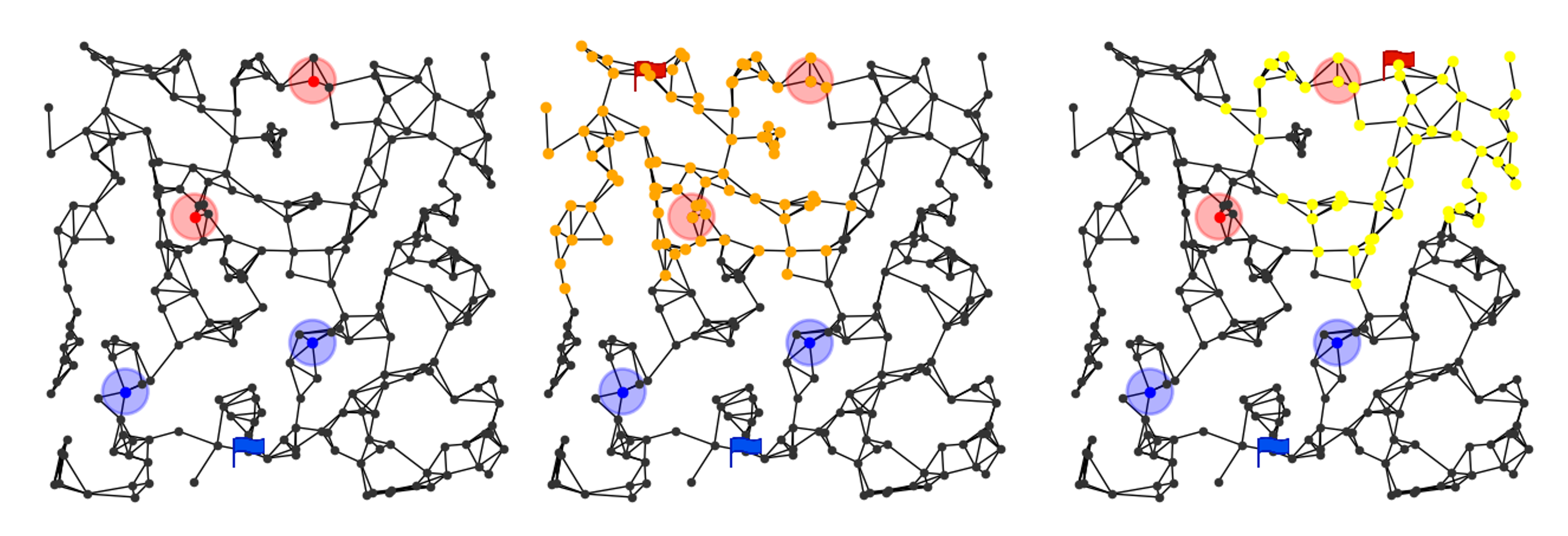}
   \caption{Graph-Structured Capture-the-Flag (Graph CtF) with Blue Team's Imperfect Information about the Red Flag Location: 2v2 CtF on a graph topology with Flag Uncertainty for Blue Team seen as two potential Red Flag locations. Limited Knowledge: Graph assumed known, but limited flag visibility (orange (yellow) indicates nodes on Left (Right) Flag are visible). Asymmetric: Blue agents cannot distinguish between Red flag locations at other nodes, whereas Red agents have perfect information about Blue flag. 
}
  \label{fig:logo}
\end{figure}
In Bayesian games, a Nature player moves before play begins and assigns each player a type drawn from a joint type distribution. Harsanyi's seminal framework \cite{harsanyi_1,harsanyi_2} models strategic interactions with incomplete information by introducing the Bayesian Nash Equilibrium (BNE), in which each player maximizes expected utility under the common prior distribution over types. While BNE provides a principled solution concept for games with hidden information, it relies critically on the common prior assumption: although players do not observe each other's private types, the underlying type distribution is assumed to be common knowledge. This assumption can be restrictive in real-world strategic interactions involving asymmetric information and deception. In many settings, the type distribution itself may be uncertain, evolve over time, or even be influenced by an adversary with an information advantage. We refer to such changes as \emph{strategic distribution shifts}. Unlike classical distributionally robust optimization (DRO), which models uncertainty through structured ambiguity sets such as KL-divergence, Wasserstein, or $f$-divergence balls around a nominal distribution, strategic distribution shifts need not admit a local divergence-based characterization. They may instead arise from regime changes, task-level shifts, or adaptive opponents that fundamentally alter the distribution of types.

Figure~\ref{fig:logo} provides a concrete example of such strategic distribution shifts in an adversarial search game. The Blue team must locate and capture a hidden Red flag before a timeout, while the Red team knows the true flag location and can defend accordingly. Blue only has access to a prior belief over possible flag locations (e.g., 60\% right and 40\% left), making the flag location a hidden type known only to Red. Consequently, a strategy that is optimized for a nominal prior may become vulnerable when the underlying type distribution changes or when Red exploits Blue’s prior assumptions.

We consider imperfect-information games with deception, specifically asymmetric-information games in which the type distribution may undergo strategic distribution shifts and the informed opponent can condition its play on the realized type. The Capture-the-Flag example illustrates a broader challenge: behaviors derived from risk-neutral solution concepts such as Bayesian Nash Equilibrium (BNE) can be highly exploitable because they rely on a nominal type distribution. This motivates the central question addressed in this work:
\textit{What is an appropriate notion of robust best response for the information-disadvantaged team, and the resulting equilibrium concept, that provides favorable worst-case performance guarantees under strategic distribution shifts?}

\begin{table}[t]
\centering
\caption{Taxonomy of robust Blue best-responses (BRs) according to the optimization objective and the amount of information trusted about Nature's nominal. Each robust Blue BR coupled with the informed Red BR leads to a unique equilibrium concept (see Section~\ref{sec:eqm_concepts_section} for details) and an associated robustness property (see Table~\ref{tab:rob_properties}) for the adversarial game with asymmetric information.}
\vspace{0.5em}
\small
\renewcommand{\arraystretch}{1.15}
\resizebox{\textwidth}{!}{%
\begin{tabular}{lcccc}
\toprule
& \multicolumn{4}{c}{\footnotesize High~~ $\xrightarrow{\hspace{2.25cm}\ \text{\fontsize{9.5}{11}\selectfont Trust in Nature's Nominal Distribution} \ \hspace{2.25cm}}$ ~~Low} \\[2pt]
\diagbox[width=3.4cm,height=1.15cm]{\textbf{Objective}}{\textbf{Trust}} 
&
\textbf{Full} &
\textbf{Local} &
\textbf{Coarse} &
\textbf{None} \\
\midrule

\textbf{Payoff}
&
\begin{tabular}{@{}c@{}}
Bayesian BR (BNE)\\
{\footnotesize Nominal Prior}
\end{tabular}
&
\begin{tabular}{@{}c@{}}
DRO-Ball\\
{\footnotesize Local Perturbations}
\end{tabular}
&
\begin{tabular}{@{}c@{}}
DRO-Structured\\
{\footnotesize Typicality-preserving shifts} \\
{\footnotesize (Structured type ambiguity)}
\end{tabular}
&
\begin{tabular}{@{}c@{}}
Worst-Case BR\\
{\footnotesize Entire type-simplex} \\
{\footnotesize (Adversarial Nature)}
\end{tabular} \\

\addlinespace[0.4em]

\textbf{Regret}
&
\cellcolor{gray!5}%
--
&
\cellcolor{gray!5}%
--
&
\cellcolor{gray!40}%
\begin{tabular}{@{}c@{}}
\textbf{PR-MRE}\\
{\footnotesize Typicality-preserving shifts} \\
{\footnotesize (Structured type ambiguity)}
\end{tabular}
&
\cellcolor{gray!20}
\begin{tabular}{@{}c@{}}
\textbf{MRE}\\
{\footnotesize Entire type-space}
\end{tabular}
\\

\bottomrule
\end{tabular}}
\label{tab:taxonomy}
\end{table}
\todo{Distributionally Robust Markov Games (either restrictive examples or heuristic training methods for the larger environments) -- also DROP for Markov Games is DRO and not worst-case robust, something we need for protection against deception (that's our threat model (we seek protection against)) -- we marry robust equilibria with adversarial training of best responses to robust equilibria and demonstrate learning approximate minimax robust policies in graph-structured imperfect-information POMGs.}

Distributionally robust optimization (DRO) protects against uncertainty through ambiguity sets around a nominal distribution. While effective for modeling local perturbations, DRO-based equilibrium concepts provide guarantees only within the specified ambiguity set and may fail under larger regime changes. For example, in a three-flag version of the search-and-capture game, a nominal type distribution of $(0.6,0.1,0.3)$ may shift to $(0.0,0.05,0.95)$, fundamentally altering the strategic landscape. Such strategic distribution shifts are not naturally captured by local divergence-based uncertainty models. At the other extreme, a fully worst-case robust approach reasons against an adversarial type distribution over the entire type simplex, but this can be overly conservative because it effectively concentrates on the most difficult type regardless of its likelihood.

To bridge this gap, we consider strategic distribution shifts that preserve \emph{type typicality}: types that are rare under the nominal distribution remain rare under admissible shifts, while probability mass may be redistributed arbitrarily among typical types. We then adopt a regret-based notion of robustness. Regret, or exploitability, measures the performance loss incurred by acting under imperfect information relative to a type-conditioned optimum. A Minimax-Regret Equilibrium (MRE) minimizes worst-case regret across all types and yields a distribution-free certificate in the form of a uniform lower bound on performance under arbitrary distribution shifts. While robust, MRE can be conservative because it ignores probabilistic information about type occurrence. We therefore introduce the Probabilistically Robust Minimax-Regret Equilibrium (PR-MRE), which incorporates coarse probabilistic information through the typicality-preserving threat model. PR-MRE interpolates between risk-neutral and worst-case approaches, yielding tighter performance guarantees while retaining robustness to strategic distribution shifts.

Computing MRE and PR-MRE in general adversarial team games remains an open challenge. Existing solution methods for ATGs \cite{anagnostides2026algorithms, atgs_neurips_2024, xiao2026solving, pmlr-v235-zhang24b, Celli2018-yn, pmlr-v162-carminati22a, anagnostides2023efficiently} and extensive-form games (EFGs), such as Counterfactual Regret Minimization (CFR) \cite{zinkevich2007regret}, are designed to compute Nash equilibria and do not naturally extend to regret-based robustness objectives under structured type uncertainty. Moreover, MRE was originally introduced in the context of finite mechanism-design settings \cite{mre, renou2010minimax, renou2011implementation}, and existing formulations do not readily generalize to the large strategy spaces such as team treeplexes encountered in adversarial team games.
To address this challenge, we propose PR-MRE PSRO, a robust double-oracle framework for computing approximate MRE and PR-MRE strategies via metagame approximation. We show that the resulting metagame problems can be formulated as bilinear and robust bilinear programs, respectively, and derive tractable semidefinite relaxations that serve as novel meta-solvers within a population-based learning framework. Combined with deep reinforcement learning best-response oracles, the proposed approach enables scalable computation of robust strategies in graph-structured adversarial team games.

From Table~\ref{tab:taxonomy}, the optimization objective (payoff vs. regret) and the degree of trust placed in Nature's nominal type distribution define a two-dimensional design space of robust equilibrium concepts, with PR-MRE occupying the previously unexplored regime of regret-based best-response under coarse probabilistic information from the nominal.
We formally state our contributions as follows,
\begin{enumerate}
    \vspace{-0.75em}
    \item \textbf{New equilibrium concept, PR-MRE}: We introduce a novel robust equilibrium concept for asymmetric information games, Probabilistically-Robust Minimax-Regret equilibrium (PR-MRE) that interpolates between risk-neutral, distributionally-robust equilibria and the distribution-free Minimax-Regret equilibrium (MRE). In contrast to MRE that treats all types equally likely, PR-MRE utilizes coarse probabilistic information to minimize worst-case exploitability across subsets of the type-space weighted by known information about their probability mass under admissible distribution shifts.
    \item \textbf{Typicality-preserving threat model and robustness guarantees}: We show that the proposed PR-MRE best-response for the information-disadvantaged team provides a robustness certificate in the form of the tightest performance lower bound under strategic distribution shifts that redistribute probability mass arbitrarily among high-confidence subsets of the type-space.
    \item \textbf{Scalable approximation algorithm, PRMRE-PSRO}: We propose a metagame approximation to compute MRE and PR-MRE in adversarial team games under asymmetric information. We show that the metagame problems corresponding to MRE and PR-MRE can be formulated as bilinear and robust bilinear programs respectively. We provide a tractable SDP relaxation that is incorporated as a robust meta-equilibria within a double-oracle framework, PRMRE PSRO to compute robust team policies for graph-structured ATGs under asymmetric information. 
    \item \textbf{Empirical validation in graph-structured adversarial team games}: We demonstrate empirically that PRMRE PSRO discovers behaviors that are substantially more robust to distribution shifts than risk-neutral equilibrium solutions. In particular, the learned policies exhibit scouting behavior prior to commitment, reducing exploitability across competing flag hypotheses rather than overfitting to the dominant hypothesis under the nominal distribution.
\end{enumerate}

\section{Graph-Structured Adversarial Team Games under Asymmetric Information}
\label{sec:atg}
\begin{figure}[t]
  \centering
  \includegraphics[scale=0.55]{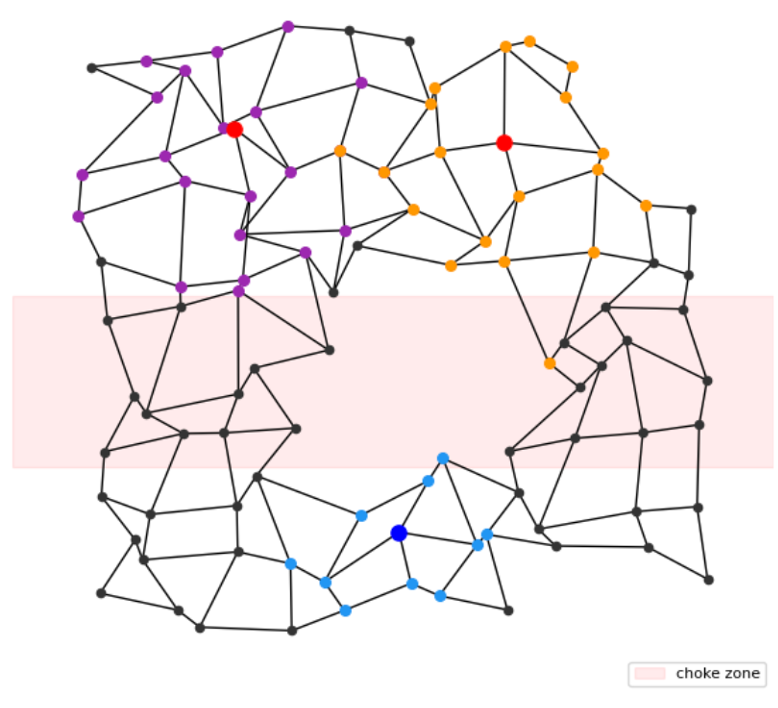}
  \caption{Graph Capture-the-Flag with two Red flag hypotheses -- $\theta_{0}$ (left) and $\theta_{1}$ (right) (marked as the two red nodes in the top basin), the shaded pink zone represents the choke zone corresponding to the two corridors, and the purple (orange) nodes represent the information frontier or the nodes from which the left (right) flag hypothesis is visible. Red team agents initialize in the top-basin whereas the Blue team agents initialize in the bottom-basin below the choke zone. The Blue team agents cannot distinguish between the two flag hypotheses before they reach the information frontier while passing through the choke zone. See Figure~\ref{fig:blue_visitation} for a behavioral analysis of the learned Blue team policies via the proposed Probabilistically-Robust Minimax-Regret, $\operatorname{PR-MRE}$ meta-equilibria versus the Bayesian Nash equilibrium, $\operatorname{BNE}$ for a nominal type distribution $(\mu(\theta_{0}), \mu(\theta_{1})) = (0.8, 0.2)$.}
  \label{wrap-fig:atg}
\end{figure}
We study graph-structured adversarial team games under asymmetric information, instantiated as Graph Capture-the-Flag (Graph CtF). Two teams of agents move on an undirected graph over a finite horizon and optimize opposing objectives -- Blue team wins if the Blue agents capture the Red flag, and Red team wins if they successfully defend their flag for the time-horizon. The key asymmetry is that the Red team observes the true flag location, while the Blue team acts under uncertainty over a finite set of flag hypotheses.

Formally, the game is defined by the tuple $\mathscr{G} = \left< \mathcal{G}, n_r, n_b, \Theta, T \right>,
$ where \(\mathcal{G} = \langle \nu, \mathcal{E} \rangle\) is an undirected graph with node set \(\nu\), \(n_r\) and \(n_b\) are the number of Red and Blue agents, \(\Theta \subseteq \nu\) is the set of possible flag locations, and \(T\) is the time horizon. The hidden type is sampled as $
\theta \sim \mu \in \Delta(\Theta)
$. The game proceeds sequentially over \(T\) steps. At each time step, agents select actions based on their available information and move on the graph according to the transition dynamics. The Red team observes the realized type \(\theta\), whereas the Blue team does not, and must act based only on its observation history. Accordingly, we consider policy classes \(\Pi_r\) and \(\Pi_b\), where Red policies may condition on \(\theta\) and history, while Blue policies depend only on the Blue team’s observation history. Let \(V(\pi_r, \pi_b; \theta)\) denote the expected return of the Blue team under policies \(\pi_r \in \Pi_r\) and \(\pi_b \in \Pi_b\) when the underlying type is \(\theta\).

The nominal type distribution $\mu(\cdot)$ induces a standard Bayesian objective. In deceptive settings, however, the true distribution over types may differ substantially from $\mu(\cdot)$, rendering policies optimized for the nominal distribution highly exploitable. This motivates the search for equilibrium concepts and associated best-response notions that remain robust under strategic distribution shifts.
\subsection{Equilibrium Concepts and Robustness Properties}
\label{sec:eqm_concepts_section}
\begin{table}[t]
\centering
\caption{Robustness properties associated with the various Blue best-responses (BRs) corresponding to different optimization objectives and trust levels on Nature (see Table~\ref{tab:taxonomy} for taxonomy on robust Blue BRs).}
\vspace{0.5em}
\small
\renewcommand{\arraystretch}{1.15}
\resizebox{\textwidth}{!}{%
\begin{tabular}{lcccc}
\toprule
& \multicolumn{4}{c}{\footnotesize High~~ $\xrightarrow{\hspace{2.25cm}\ \text{\fontsize{9.5}{11}\selectfont Trust in Nature's Nominal Distribution} \ \hspace{2.25cm}}$ ~~Low} \\[2pt]
\diagbox[width=3.4cm,height=1.15cm]{\textbf{Objective}}{\textbf{Trust}} 
&
\textbf{Full} &
\textbf{Local} &
\textbf{Coarse} &
\textbf{None} \\
\midrule

\textbf{Payoff}
&
No Robustness
&
\begin{tabular}{@{}c@{}}
Worst-case  \\
payoff certificate \\
{\footnotesize Local perturbations}
\end{tabular}
&
\begin{tabular}{@{}c@{}}
Worst-case \\
payoff certificate \\
{\footnotesize Typicality-preserving shifts} \\
{\footnotesize (Structured type ambiguity) }
\end{tabular}
&
\begin{tabular}{@{}c@{}}
Worst-case\\
payoff certificate \\
{\footnotesize Entire type-simplex}
\end{tabular} \\

\addlinespace[0.4em]

\textbf{Regret}
&
\cellcolor{gray!5}%
--
&
\cellcolor{gray!5}%
--
&
\cellcolor{gray!40}%
\begin{tabular}{@{}c@{}}
\textbf{Worst-case} \\ \textbf{regret certificate} \\ $\Leftrightarrow$ \textbf{ULB} (Lemma~\ref{lemma:prmre_rob_property}) \\
{\footnotesize Typicality-preserving shifts} \\
{\footnotesize (Structured type ambiguity)}
\end{tabular}
&
\cellcolor{gray!20}
\begin{tabular}{@{}c@{}}
\textbf{Worst-case}\\ \textbf{regret certificate} \\
$\Leftrightarrow$ \textbf{ULB} (Lemma~\ref{lemma:ulb_mre}) \\
{\footnotesize Entire type-space}
\end{tabular}
\\

\bottomrule
\end{tabular}}
\label{tab:rob_properties}
\end{table}
In this section, we present a unified treatment of various equilibrium concepts in the specific context of adversarial team games and analyse their robustness properties with respect to shifts in the type distribution. For the case of asymmetric information, the Red team has perfect knowledge of Nature's hidden type (flag location) and can condition its play on the observed type.
\begin{definition}
\label{assumption:red_team_rationality} The Red team best-response operator $\operatorname{BR}^{\mathrm{(r)}}(\pi_{b}; \theta): \Pi_{\mathrm{b}} \times \Theta \rightarrow \Pi_{\mathrm{r}}$ to the Blue team strategy $\pi_{b} \in \Pi_{\mathrm{b}}$ and observed type $\theta$ is given as $\operatorname{BR}^{\mathrm{(r)}}(\pi_{b}; \theta) = \arginf\limits_{\pi \in \Pi_{\mathrm{r}}} V(\pi(\cdot), \pi_{b}; \theta)$.
\end{definition}

Thus, for a given Blue team strategy, the Red team computes a best-response that minimizes the expected Blue team return conditioned on the observed type. How the Blue team defines as `rational' play under imperfect information leads to a unique equilibrium concept and an associated robustness property. We now define various best-response operators for the Blue team, each of which, when coupled with the Red team best-response operator stated in Definition~\ref{assumption:red_team_rationality} defines a unique equilibrium.

\subsubsection{\textbf{Risk-Neutral Methods}}
\label{sec:bne}
\begin{definition}\label{def:bbr}(Bayesian Best-Response Operator under Asymmetric Information) The Blue team Bayesian best-response operator $\operatorname{BR}^{\mathrm{(b)}}(\pi_{r}(\cdot); p, \Theta): \Pi_{\mathrm{r}} \times \Delta(\lvert\Theta\rvert) \rightarrow \Pi_{\mathrm{b}}$ given Red team strategy $\pi_{r}(\cdot)$ and the nominal type distribution $p \in \Delta(\lvert\Theta\rvert)$ is given as follows,
\begin{align}
        \operatorname{BR}^{\mathrm{(b)}}(\pi_{r}(\cdot); p, \Theta) =  \argsup_{\pi \in \Pi_{\mathrm{b}}} \mathbb{E}_{\theta \sim p} [V(\pi_{r}(\cdot), \pi; \theta)].
    \end{align}
\end{definition}
The Bayesian best-response operator maximizes expected team performance under a nominal type distribution leading 
to the risk-neutral equilibrium concept of a Bayesian Nash equilibrium.
\begin{definition}\label{def:bayesian_nash}(Bayesian Nash Equilibrium, BNE) The ex-ante Bayesian Nash equilibrium $(\pi^{*}_{b, \mathrm{BNE}}(p), \pi^{*}_{r})$ for a nominal type distribution $p$ is defined as the fixed point of the best-response operators defined in Definition~\ref{assumption:red_team_rationality} and Definition~\ref{def:bbr} as the following,
$$\pi^{*}_{b, \mathrm{BNE}}(p) \in \operatorname{BR}^{\mathrm{(b)}}(\pi^{*}_{r}(\cdot); p, \Theta) \ \text{and} \ \pi^{*}_{r}(\cdot) \in \operatorname{BR}^{\mathrm{(r)}}(\pi^{*}_{b, \mathrm{BNE}}(p); \cdot).$$
\end{definition}
The BNE is an ex-ante equilibrium concept in which players commit to strategies before play commences and optimize expected utility under the nominal type distribution. We focus on this setting because it aligns naturally with population-based training methods in which meta-equilibria are computed over sets of competing policies. For ex-post notions that incorporate sequential rationality and belief consistency, we refer the reader to refinements such as the perfect Bayesian Nash equilibrium (PBNE) \cite{ouyang2016dynamic,konicki2026computingperfectbayesianequilibria}.

From Definition~\ref{def:bayesian_nash}, the BNE can equivalently be interpreted as a solution to the following saddle-point equation, 
\begin{align}
    \label{eq:bne_saddle_point}
    \mathbb{E}_{\theta \sim p} \left[ V(\pi^{*}_{r}(\cdot), \pi_{b}; \theta) \right] \leq \mathbb{E}_{\theta \sim p} \left[ V(\pi^{*}_{r}(\cdot), \pi^{*}_{b, \mathrm{BNE}}(p); \theta) \right] \leq \mathbb{E}_{\theta \sim p} \left[ V(\pi_{r}(\cdot), \pi^{*}_{b, \mathrm{BNE}}(p); \theta) \right].
\end{align}
At the saddle point given by the strategy-tuple $\left(\pi^{*}_{b}, \pi^{*}_{r}(\cdot)\right)$ (the BNE), no team has any incentive to deviate given the opponent team strategy. An exact computation of the saddle point eq.~(\ref{eq:bne_saddle_point}) for the general formulation where strategy spaces $\Pi_{r}$ and $\Pi_{b}$ are sequence-form polytopes (or team treeplexes) is intractable. Approaches in the MARL community to find tractable approximations to the equilibria of high-dimensional games such as the graph-structured adversarial team game introduced in Section~\ref{sec:atg} broadly consists of DO-based \cite{carminati2022marriageadversarialteamgames, tme_cor_psro}, CFR-based \cite{treeplexes_efg_kroer} and PPO-based \cite{sokkota2026}.
Risk-neutral equilibria, such as BNE (\ref{eq:bne_saddle_point}), do not take into account potential strategic shifts in the type distribution,  which motivates the following class of distributionally-robust and risk-averse methods to manage risk posed by ambiguity due to deception. 

\subsubsection{\textbf{Distributionally-Robust Methods}}
\label{sec:dro}
A distributionally-robust Blue team response given opponent team strategy $\pi_{r}(\cdot) \in \Pi_{\mathrm{r}}$ optimizes for the best worst-case performance for type distributions belonging to an ambiguity set. The distributionally-robust best-response operator corresponding to a distributional ambiguity set $\mathscr{P}$ is defined as $ \operatorname{BR}^{\mathrm{(b)}}_{\mathrm{DRO}}(\pi_{r}(\cdot); \mathscr{P}, \Theta) =  \argsup\limits_{\pi \in \Pi_{b}} \inf\limits_{q \in \mathscr{P}} \mathbb{E}_{\theta \sim q} [V(\pi_{r}(\cdot), \pi; \theta)]$. \todo{or, more generally, as $ \operatorname{BR}^{\mathrm{(b)}}_{\mathrm{RA}}(\pi_{r}(\cdot); \mathscr{P}, \Theta) =  \argsup\limits_{\pi \in \Pi_{\mathrm{b}}} \inf\limits_{q \in \mathscr{P}} \mathbb{\rho}_{\theta \sim q} \left(V(\pi_{r}(\cdot), \pi; \theta)\right)$ where $\rho(\cdot)$ is a general convex risk measure.}
\begin{align}
    \label{eq:dro_brittle}
    \mathbb{E}_{\theta \sim f}[V( \pi_{r}(\cdot), \pi_{b, \mathrm{DRO}}(\pi_{r}(\cdot), \mathscr{P}); \theta)] \leq \inf\limits_{q \in \mathscr{P}}\mathbb{E}_{\theta \sim q}[V( \pi_{r}(\cdot), \pi_{b, \mathrm{DRO}}(\pi_{r}(\cdot), \mathscr{P}); \theta)].
\end{align}
%
The robustness guarantee of the DRO best-response is restricted to distributions contained within the ambiguity set $\mathcal{P}$. Consequently, DRO does not provide performance guarantees under strategic distribution shifts that fall outside the prescribed uncertainty model.

\subsubsection{\textbf{Adversarial Nature and Worst-Case Robustness}}
\label{sec:wc_eqm}
A rational, albeit pessimistic Blue team best-response under adversarial type selection is given by $\operatorname{BR}^{\mathrm{(b)}}_{\mathrm{wc}}(\pi_{r}(\cdot)) =  \argsup\limits_{\pi \in \Pi_{\mathrm{b}}} \left( \inf_{q \in \Delta(\lvert\Theta\rvert)} \mathbb{E}_{\theta \sim q} [V(\pi_{r}(\cdot), \pi; \theta)] \right)$. Let $q_{\mathrm{wc}}$ be the adversarial type distribution corresponding to the above worst-case best-response operator on given $\pi_{r}(\cdot)$ such that for $\pi_{b, \mathrm{wc}} (\pi_{r}(\cdot)) \in \operatorname{BR}^{\mathrm{(b)}}_{\mathrm{wc}}(\pi_{r}(\cdot))$, $$ \sup\limits_{\pi \in \Pi_{\mathrm{b}}} \inf\limits_{q \in \Delta(\lvert\Theta\rvert)} \mathbb{E}_{\theta \sim q} [V(\pi_{r}(\cdot), \pi; \theta)] = \mathbb{E}_{\theta \sim q_{\mathrm{wc}}} [V(\pi_{r}(\cdot), \pi_{b, \mathrm{wc}}(\pi_{r}(\cdot)); \theta)].$$
For a given Red team strategy, the above worst-case best-response operator solves a zero-sum game between the Blue team and Nature to compute a robust Blue team best-response. This leads to the following three-player equilibrium formulation where Nature acts as an adversary and plays an adversarial type distribution such that the Red team and Nature collude (since Red can correlate play with Nature) against the Blue team,
$\pi^{*}_{b, \mathrm{wc}} \in \operatorname{BR}^{\mathrm{(b)}}_{\mathrm{wc}}(\pi^{*}_{r}(\cdot)) \ \text{and} \ \pi^{*}_{r}(\cdot) \in \operatorname{BR}^{\mathrm{(r)}}(\pi^{*}_{b, \mathrm{wc}}; \cdot).$
Worst-case robust equilibrium provides a global performance guarantee when the Nature is adversarial such that $\mathbb{E}_{\theta \sim q_{\mathrm{wc}}}[ V(\pi_{r}(\cdot), \pi^{'}; \theta)] \leq \mathbb{E}_{\theta \sim q_{\mathrm{wc}}}[ V(\pi_{r}(\cdot), \pi_{b, \mathrm{wc}}; \theta)]$. This however can lead to overly pessimistic strategies such that,
\begin{align}\label{eq:wc_dro_comparison_eq}
 \mathbb{E}_{\theta \sim q_{\mathrm{wc}}}[ V(\pi_{r}(\cdot), \pi_{b, \mathrm{wc}}; \theta)] \leq \mathbb{E}_{\theta \sim q \in \mathcal{Q} }[ V(\pi_{r}(\cdot), \pi_{b, \mathrm{wc}}; \theta)] \leq \mathbb{E}_{\theta \sim q \in \mathcal{Q} }[ V(\pi_{r}(\cdot), \pi_{b, \mathrm{DRO}}; \theta)].
\end{align}
The three-player game formulation treats all types with equal importance independent of likelihood under nominal distribution and computes a robust Blue response to an adversarial type distribution that concentrates mass potentially on the hardest type. Such an approach discards available probabilistic information about the occurrence of types and could lead to behavior with poor typical performance under non-adversarial distribution shifts. In the pursuit of global performance guarantees, we discuss a tangential approach in the following subsection that reasons about regret with respect to type-optimal best-responses.

\subsubsection{\textbf{Minimax-Regret Robust Best-Response and Minimax-Regret Equilibrium (MRE)}}
Distributionally robust methods offer robustness only up to an ambiguity set centered at the nominal distribution and are suitable for settings when protection is sought against mis-specification of the nominal distribution (prior) and not strategic distribution shifts. \textit{Strategic distribution shifts} refer to a \textit{smart} Nature running an oracle procedure to compute a type distribution that potentially places probability mass adversarially on high-regret types. We characterize regret (or exploitability) with respect to a type as follows, 
\begin{tcolorbox}[
colback=white,
colframe=black!20,
colbacktitle=white,
coltitle=black
]
\begin{definition}\label{def:exp}(Exploitability)
We define the exploitability (or regret) of Blue team policy $\pi$ on type $\theta$ given Red team policy $\pi_{r}(\cdot)$ as, 
\begin{align}\label{eq:exp}
    \mathcal{E} \left(\pi_{r}(\cdot), \pi; \theta\right) = \sup_{\pi^{*} \in \Pi_{b}}V(\pi_{r}(\cdot), \pi^{*}; \theta) - V(\pi_{r}(\cdot), \pi; \theta),
\end{align}
and the worst-case exploitability of $\pi$ given $\pi_{r}(\cdot)$ over the type-space as $\mathcal{E} \left(\pi_{r}(\cdot), \pi; \Theta \right) = \sup\limits_{\theta \in \Theta} \mathcal{E} \left(\pi_{r}(\cdot), \pi; \theta\right)$.
\end{definition}
\end{tcolorbox}
We define the following Blue team robust best-response given opponent team strategy $\pi_{r}(\cdot)$ that minimizes worst-case regret with respect to type-optimal best-responses across the type-space.

\begin{definition}\label{def:mmr}(Minimax-Regret Best-Response Operator under Asymmetric Information)
\begin{align}\label{eq:mre_br_def}
\operatorname{BR}^{\mathrm{(b)}}_{\mathrm{MMR}}(\pi_{r}; \Theta) =  \arginf_{\pi \in \Pi_{\mathrm{b}}} \sup_{\theta \in \Theta} \mathcal{E} \left(\pi_{r}(\cdot), \pi; \theta\right)
    \end{align}
\end{definition}
In other words, the minimax-regret robust best-response penalizes deviation from the performance obtained under perfect information about the type. Such a rational attitude hedges across all types under imperfect information and prevents being overly sub-optimal on any single type. The ex-ante Minimax-Regret (MRE) equilibrium $(\pi^{*}_{b}, \pi^{*}_{r})$ is defined as the fixed point of the best-response operator defined in Def.~\ref{def:mmr},
$\pi^{*}_{b, \mathrm{MRE}} \in \operatorname{BR}^{\mathrm{(b)}}_{\mathrm{MMR}}(\pi^{*}_{r}; \Theta) \ \text{and} \ \pi^{*}_{r}(\cdot) \in \operatorname{BR}^{\mathrm{(r)}}(\pi^{*}_{b, \mathrm{MRE}}; \cdot).$
The MRE best-response is distribution-free and provides global performance guarantees independent of any restrictive assumptions about the type to lie within an ambiguity set around the nominal distribution.
The key consequence of minimax-regret reasoning is that it induces a performance certificate that is valid for every type distribution in the simplex. This certificate is formalized below.
\begin{tcolorbox}[
colback=white,
colframe=black!20,
colbacktitle=white,
coltitle=black
]
\begin{lemma}[\textit{Uniform Performance Lower Bound}]\label{lemma:ulb_mre} For a given Red team strategy $\pi_{r}(\cdot)$, the expected payoff under Blue team strategy $\pi_{b}$ and a type distribution $\mu(\cdot)$ can be bounded from below as follows such that the lower bound is tight, i.e., $ \exists \ q \in \Delta(\lvert\Theta\rvert) $ for which equality holds,
\begin{align}\label{eq:mre_perf_lb}
    \mathbb{E}_{\theta \sim \mu}[V(\pi_{r}(\cdot), \pi_{b}; \theta] \geq \mathcal{V}^{(-)}(\pi_{b}, \mu) =  \sum_{\theta \in \Theta} \mu(\theta) \sup_{\pi \in \Pi_{\mathrm{b}}} V(\pi_{r}, \pi; \theta) - \mathcal{E}(\pi_{r}(\cdot), \pi_{b}, \Theta) \ \forall \ \mu \in \Delta(\lvert\Theta\rvert).
\end{align}
\end{lemma}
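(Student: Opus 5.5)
The bound is pointwise in $\theta$ and is then averaged against $\mu$. Fix $\pi_{r}(\cdot)$ and $\pi_{b}$, and write $V^{\star}(\theta) = \sup_{\pi \in \Pi_{\mathrm{b}}} V(\pi_{r}(\cdot), \pi; \theta)$ for the type-optimal value. By Definition~\ref{def:exp}, for every $\theta \in \Theta$,
\begin{align*}
V(\pi_{r}(\cdot), \pi_{b}; \theta) = V^{\star}(\theta) - \mathcal{E}(\pi_{r}(\cdot), \pi_{b}; \theta) \geq V^{\star}(\theta) - \mathcal{E}(\pi_{r}(\cdot), \pi_{b}; \Theta).
\end{align*}
The inequality holds because the per-type regret is at most its supremum over $\Theta$.

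Next, take the expectation under an arbitrary $\mu \in \Delta(\lvert\Theta\rvert)$. Since $\Theta$ is finite, this is a finite convex combination, so linearity applies with no measurability issues. Using $\sum_{\theta}\mu(\theta) = 1$, the constant term $\mathcal{E}(\pi_{r}(\cdot), \pi_{b}; \Theta)$ comes out unchanged. This gives exactly $\mathbb{E}_{\theta\sim\mu}[V] \geq \sum_{\theta}\mu(\theta)V^{\star}(\theta) - \mathcal{E}(\pi_{r}(\cdot), \pi_{b}; \Theta) = \mathcal{V}^{(-)}(\pi_{b},\mu)$ for all $\mu$, which is \eqref{eq:mre_perf_lb}. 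The step needs $V$ to be bounded, so that $V^{\star}(\theta)$ and the regrets are finite. I would state this as a standing assumption; it holds in Graph CtF, where returns lie in a bounded range over a finite horizon.

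For tightness, choose the worst-regret type $\theta^{\dagger} \in \arg\max_{\theta\in\Theta} \mathcal{E}(\pi_{r}(\cdot), \pi_{b}; \theta)$. It exists because $\Theta$ is finite, so the supremum in Definition~\ref{def:exp} is attained. Let $q = \delta_{\theta^{\dagger}}$, the point mass on $\theta^{\dagger}$. Then $\mathbb{E}_{\theta\sim q}[V] = V^{\star}(\theta^{\dagger}) - \mathcal{E}(\pi_{r}(\cdot),\pi_{b};\theta^{\dagger}) = \mathcal{V}^{(-)}(\pi_{b}, q)$, so equality holds.

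The main obstacle is interpreting "tight" correctly rather than any calculation. Equality holds exactly for the distributions $q$ supported on $\arg\max_{\theta}\mathcal{E}(\pi_{r}(\cdot),\pi_{b};\theta)$, and for those only. So the claim is existence of one such $q$, not equality for every $\mu$. I would add a remark noting that this attaining $q$ is precisely what an adversarial ("smart") Nature selects. This shows that no constant smaller than $\mathcal{E}(\pi_{r}(\cdot),\pi_{b};\Theta)$ can replace the regret term uniformly over the simplex, and this is what connects the certificate to the minimax-regret best response in Definition~\ref{def:mmr}.
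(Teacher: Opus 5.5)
Your proof is correct and is essentially the paper's. The paper splits $\mathbb{E}_{\theta\sim\mu}[V]$ into $\sum_{\theta}\mu(\theta)V^{\star}(\theta)-\sum_{\theta}\mu(\theta)\mathcal{E}(\pi_{r}(\cdot),\pi_{b};\theta)$, bounds the second sum by $\sup_{q\in\Delta(\lvert\Theta\rvert)}\sum_{\theta}q(\theta)\mathcal{E}(\pi_{r}(\cdot),\pi_{b};\theta)=\mathcal{E}(\pi_{r}(\cdot),\pi_{b};\Theta)$ (your pointwise bound, averaged in the other order), and certifies tightness with the same point mass $e_{\theta^{\dagger}}$ on the worst-regret type. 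Its detour through the supremum over the simplex only serves to carry over verbatim to Lemma~\ref{lemma:threat_model_perf_lb}, with $\Delta(\lvert\Theta\rvert)$ replaced by $\mathcal{S}(\bar{\mu},\delta)$; writing the point mass as $e_{\theta^{\dagger}}$ rather than $\delta_{\theta^{\dagger}}$ also avoids a clash with the confidence parameter $\delta$.
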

\begin{proof}
    Proof in Appendix~\ref{sec:appendix_proofs}.
\end{proof}
\end{tcolorbox}
From (\ref{eq:mre_perf_lb}), the lower-bound on performance can be decoupled into a per-type component, free of the decision variable $\pi_{b}$, representing optimal decision under perfect information and a second term denoting the worst-case exploitability of the decision $\pi_{b}$ across the type-space. Thus, the MRE best-response $\pi_{b, \mathrm{MRE}} \in \operatorname{BR}^{\mathrm{(b)}}_{\mathrm{MMR}}(\pi_{r}; \Theta)$ from (\ref{eq:mre_br_def}) can equivalently be interpreted as the decision $\pi_{b, \mathrm{MRE}}$ that maximizes the performance lower-bound $\mathcal{V}^{(-)}(\cdot, \mu)$ for all type distributions $\mu(\cdot)$ in the type-simplex i.e., $$\operatorname{BR}^{\mathrm{(b)}}_{\mathrm{MMR}}(\pi_{r}; \Theta) =  \arginf\limits_{\pi \in \Pi_{\mathrm{b}}} \sup\limits_{\theta \in \Theta} \mathcal{E} \left(\pi_{r}(\cdot), \pi; \theta\right) = \argsup\limits_{\pi \in \Pi_{\mathrm{b}}} \mathcal{V}^{(-)}(\pi, \mu) \ \forall \ \mu(\cdot) \in \Delta(\lvert\Theta\rvert).$$ 

In contrast to DRO methods that do not provide guarantees for perturbations outside the ambiguity set and worst-case robust equilibria that concentrate mass on hard types agnostic to the likelihood information contained in the nominal type distribution leading to poor typical performance under non-adversarial distribution shifts, MRE provides a uniform lower bound on performance across the type-simplex by hedging regret across all types. 

\textbf{Limitation}: MRE minimizes worst-case exploitability across the entire type-space and does not utilize useful probabilistic information from the nominal type distribution. For instances where a type with high exploitability is rare under the nominal type-distribution and admissible perturbations thereof, minimizing worst-case exploitability across the entire type-space might lead to conservative performance under typical distributions. This necessitates the need for a robust best-response and an associated equilibrium concept bridging distributionally robust and risk-averse methods and minimax-regret equilibria that utilizes probabilistic information about type occurrence whilst also providing robustness guarantees to non-adversarial distribution shifts in terms of a performance lower bound for typical distributions.

\section{Probabilistically-Robust Minimax-Regret Equilibria}
\label{sec:PRMRE}
Motivated by the limitations of existing equilibrium concepts as discussed in Section~\ref{sec:eqm_concepts_section}, we introduce a novel robust best-response operator and the associated equilibrium concept -- Probabilistically-Robust Minimax-Regret equilibria for ATGs under asymmetric information that minimizes exploitability across subsets of the type-space weighted by known information about their probability mass under admissible distribution shifts.
\begin{figure}[t]
  \centering
  \includegraphics[scale=0.60]{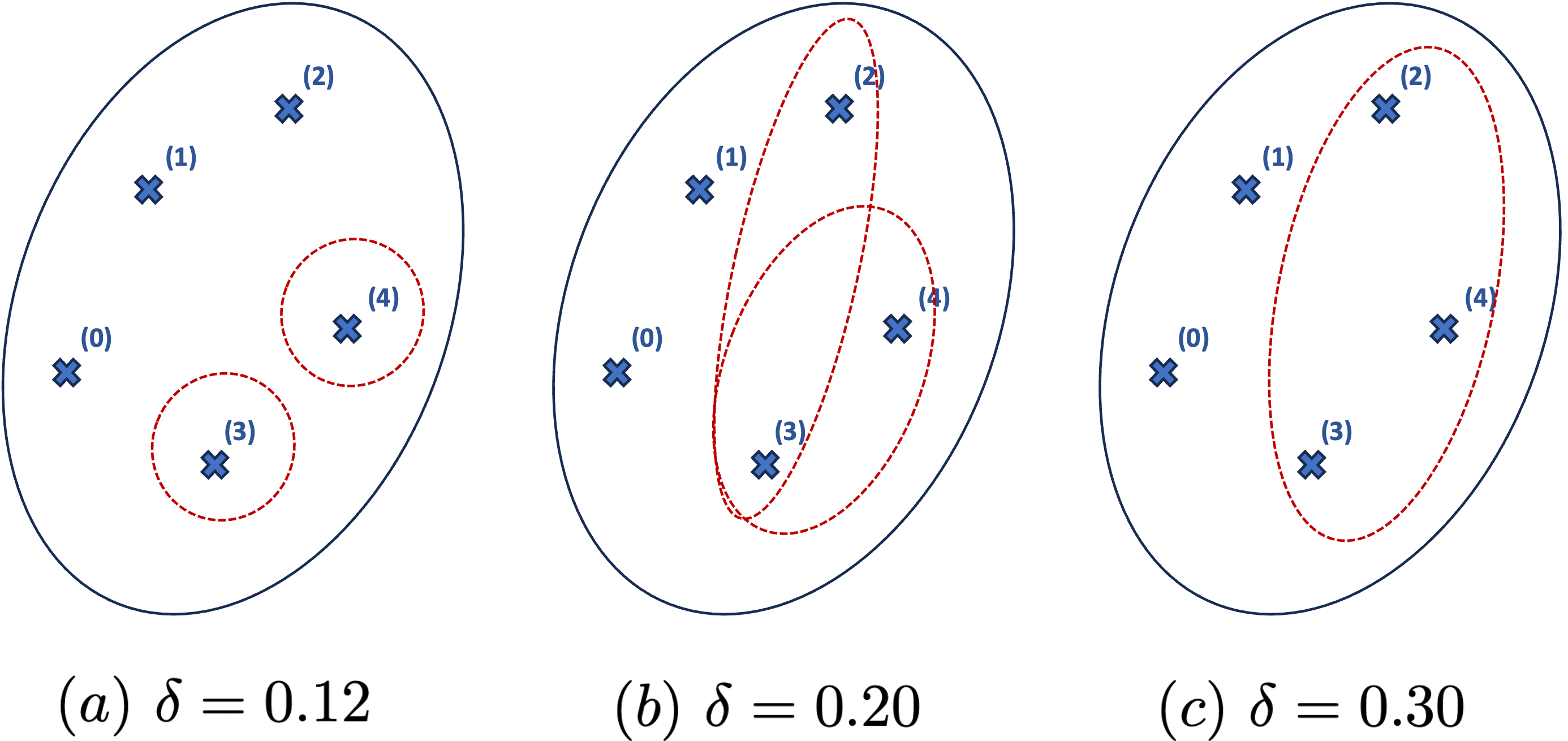}
  \caption{Illustration of PR-MRE, nominal distribution $\bar{\mu}(\cdot) = (0.35, 0.35, 0.15, 0.03, 0.12)$. The blue outer ellipse denotes the type space $\Theta = \{\theta_{0}, \cdots, \theta_{4}\}$, the crosses denote specific types and the red dotted ellipses denote the set of low-probability covers $ \Theta_{l}(\bar{\mu}, \delta)$ for various confidence levels $\delta$. For (a), $\Theta_{l}(\bar{\mu}, 0.12) = \{ \{\theta_{3}\}, \{\theta_{4}\} \}$ such that under admissible shifts $\mu \in \Delta(\lvert\Theta\rvert)$, $ \mu(\delta_{3}) \leq \delta \text{ and } \mu(\delta_{4}) \leq \delta$. Similarly for (b), $\Theta_{l}(\bar{\mu}, 0.20) = \{ \{\theta_{2}, \theta_{3}\}, \{ \theta_{3}, \theta_{4}\}\}$ such that $\mu(\theta_{2}) + \mu(\theta_{3}) \leq \delta$ and $ \mu(\delta_{3}) + \mu(\delta_{4}) \leq \delta$, and for (c) $\Theta_{l}(\bar{\mu}, 0.30) = \{ \theta_{2}, \theta_{3}, \theta_{4}\}$ such that $\mu(\theta_{2}) + \mu(\theta_{3}) + \mu(\theta_{4}) \leq \delta$. Rather than minimizing worst-case exploitability uniformly across the entire type space as in MRE, PR-MRE focuses on high-confidence subsets while retaining robustness to low-probability covers defined through the confidence level $\delta$.
  }
  \label{fig:prob_cover}
\end{figure}
We first discuss the threat model. We consider distribution shifts that preserve type typicality, meaning that types that are rare under the nominal distribution remain rare under admissible shifts. Admissible distributions therefore preserve coarse probabilistic information encoded in the nominal type distribution while allowing substantial redistribution of probability mass within high-probability regions of the type space.
This is formalized via covers of the type-space as follows,
\begin{tcolorbox}[
colback=white,
colframe=black!20,
colbacktitle=white,
coltitle=black
]
\begin{definition}\label{def:hp_covers}(Probabilistic Covers of Type-Space)
For a nominal type distribution $\bar{\mu}(\cdot)$ and confidence parameter $\delta$, a valid high-probability cover of the type-space $\mathscr{C} \subseteq \Theta$ is such that $ \sum_{\theta \in \mathscr{C}} \bar{\mu}(\theta) \geq 1 - \delta$. The associated low-probability cover $\mathscr{L}$ to the high-probability cover $\mathscr{C}$ is defined as $\mathscr{L} = \mathscr{C}^{c}$ such that $\sum_{\theta \in \mathscr{L}}\bar{\mu}(\theta) \leq \delta$.
\end{definition}
\end{tcolorbox}
We characterize permissible perturbations to the nominal type distribution as given by the set of distributions that preserve all high-probability covers (or equivalently, all low-probability covers) of the type-space. For a nominal type distribution $\bar{\mu}(\cdot)$ and confidence parameter $\delta$, the set of all $(1-
\delta)$- high-probability covers $\Theta$ is defined as $\Theta(\bar{\mu}, \delta) \coloneqq \{ \mathscr{C} \subseteq \Theta \text{\ such that \ } \sum_{\theta \in \mathscr{C}} \bar{\mu}(\theta) \geq 1 - \delta\}$ and the set of all $\delta$- low-probability covers is defined as $ \Theta_{\ell}(\bar{\mu}, \delta) \coloneqq \{ \mathscr{C}^{c} \text{ for all } \mathscr{C} \in \Theta(\bar{\mu}, \delta)\}$. We also denote the set of all rare types under the nominal distribution and confidence level $\delta$ as $\Theta_{r}(\bar{\mu}, \delta) = \{\theta \text{ such that } \bar{\mu}(\theta) \leq \delta\}$. We now formally characterize our threat model as follows,

\begin{tcolorbox}[
colback=white,
colframe=black!20,
colbacktitle=white,
coltitle=black
]
\begin{definition}(Threat Model)\label{def:threat_model}
The permissible set of perturbed distributions $\mathcal{S}(\bar{\mu}, \delta)$ is given as
$\mathcal{S}(\bar{\mu}, \delta) = \{ \mu(\cdot) \in \Delta(\lvert\Theta\rvert) \ \text{ such that } \ \sum\limits_{\theta \in \mathscr{L}} \mu(\theta) \leq \delta \ \text{ for all } \ \mathscr{L} \in \Theta_{\ell}(\bar{\mu}, \delta)\}$.
\end{definition}
\end{tcolorbox}
In other words, the permissible perturbed distributions under our threat model allocate probability mass arbitrarily up to respecting typicality of types such that low-probability covers remain low probability.

For instance, for a nominal distribution $\bar{\mu}(\cdot) = (0.35, 0.35, 0.15, 0.03, 0.12)$ and $\delta=0.12$, it is easy to verify that the low-probability cover is given by $\Theta_{\ell}(\bar{\mu}, 0.12) = \{(\theta_{3}),( \theta_{4})\}$ (as depicted in Figure~\ref{fig:prob_cover}a) such that for any permissible perturbed distribution $\mu(\cdot) \in \mathcal{S}(\bar{\mu}, \delta)$, $\mu(\theta_{3}) \leq \delta$ and $\mu(\theta_{4}) \leq \delta$. Similarly for $\delta=0.2$, the low probability cover is given by $\Theta_{\ell}(\bar{\mu}, 0.20) = \{ (\theta_{2}, \theta_{3}), (\theta_{3}, \theta_{4})\}$ and permissible distributions $\mu(\cdot)$ are such that $\mu(\theta_{2}) + \mu(\theta_{3}) \leq \delta$ and $\mu(\theta_{3}) + \mu(\theta_{4}) \leq \delta$ (Figure~\ref{fig:prob_cover}b),  and for $\delta = 0.3$ (Figure~\ref{fig:prob_cover}c), $\mu(\theta_{2}) + \mu(\theta_{3}) + \mu(\theta_{4}) \leq \delta$. The inequality constraints arise as a direct consequence of the threat model that preserves coarse probabilistic information about type occurrence (all high-probability covers of the type-space) under distribution shift. Such a threat model makes sense for scenarios when the task hypotheses map to actual real-world configurations. For instance, for capture-the-flag, certain flag hypotheses might actually be realized with a bounded probability because of terrain traversability, other physical constraints etc.

\textbf{Geometry of $\mathcal{S}(\bar{\mu}, \delta)$}: The threat model from Definition~\ref{def:threat_model} admits a meaningful class of strategic perturbations not captured by ball-shaped ambiguity sets centered at a nominal type distribution (based on for example KL- or generalized f-divergence). Unlike ambiguity-set methods that constrain perturbations to remain close to the nominal distribution, our threat model permits arbitrary redistribution of probability mass within a high-confidence subset of the type space while preserving its aggregate probability mass.
For the high-confidence subset $\Theta / \Theta_{r} \subseteq \Theta$, arbitrary allocations of probability mass across types within $\Theta / \Theta_{r}$ is allowed as long as the net mass over rare types $\Theta_{r}$ follows constraints posed by Definition~\ref{def:threat_model}. Consequently, the model admits highly concentrated (``peaky’’) distributions that place most of their mass on a single type, while still preserving coarse probabilistic information about which regions of the type space are typical. Figure~\ref{fig:prob_cover} illustrates the key distinction such that for $\bar{\mu} = (0.35, 0.35, 0.15, 0.03, 0.12)$ and $\delta=0.12$, $(0, 0, 1, 0, 0)$ is a permissible type distribution under the typicality-preserving threat model where $\mu(\theta_{1}), \mu(\theta_{2})$ shift from $0.35 \rightarrow 0$ and $\mu(\theta_{3})$ shifts from $0.15 \rightarrow 1$. Other example distributions contained in the permissible set $\mathcal{S}(\bar{\mu}, 0.12)$ include $(0.12, 0, 0.76, 0.06, 0.06), (0.90, 0, 0, 0.10, 0)$ and $(0, 1, 0, 0, 0)$.


This motivates a robust best-response operator that balances exploitability on high-probability regions of the type space against exploitability on atypical types. The resulting operator is defined as follows.
\begin{tcolorbox}[
title=Probabilistically-Robust Minimax-Regret (PR-MRE) Best-Response Operator,
colback=white,
colframe=black!40,
colbacktitle=white,
coltitle=black
]
\begin{definition}\label{def:prmre}(PR-MRE Best-Response Operator under Asymmetric Information)
For a given Red team strategy $\pi_{r}(\cdot)$, the probabilistically-robust MRE best-response operator for the Blue team is defined as follows,
\begin{align}\label{eq:prmre_br_def}
\operatorname{BR}^{\mathrm{(b)}}_{\mathrm{PRMRE}}(\pi_{r}(\cdot); \bar{\mu}, \delta) =  \arginf_{\pi \in \Pi_{\mathrm{b}}} \sup_{\mu \in \mathcal{S}(\bar{\mu}, \delta)} \sum_{\theta \in \Theta} \mu(\theta) \mathcal{E} \left(\pi_{r}(\cdot), \pi; \theta \right).
\end{align}
\end{definition}
\end{tcolorbox}
\begin{tcolorbox}[
colback=white,
colframe=black!40,
colbacktitle=white,
coltitle=black
]
\begin{lemma}\label{lemma:threat_model_perf_lb}
For a given Red team strategy $\pi_{r}(\cdot)$, the expected payoff for Blue team strategy $\pi_{b}$ and any type distribution $\mu(\cdot) \in \mathcal{S}(\bar{\mu}, \delta)$ permissible under the threat model can be bounded from below as follows,
\begin{align}\label{eq:prmre_perf_lb}
    \mathbb{E}_{\theta \sim \mu}[V(\pi_{r}(\cdot), \pi_{b}; \theta] \geq \mathcal{V}_{\mathcal{S}}^{(-)}(\pi_{b}, \mu) =  \sum_{\theta \in \Theta} \mu(\theta) \sup_{\pi^{*} \in \Pi_{\mathrm{b}}} V(\pi_{r}, \pi^{*}; \theta) - \sup\limits_{q \in \mathcal{S}(\bar{\mu}, \delta)} \left( \sum\limits_{\theta \in \Theta} q(\theta) \mathcal{E}(\pi_{r}(\cdot), \pi_{b}; \theta) \right).
\end{align}
\end{lemma}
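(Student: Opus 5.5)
The plan is a direct algebraic rewrite of the expected payoff followed by a single supremum bound over $\mathcal{S}(\bar{\mu}, \delta)$.

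\textbf{Step 1 (rewrite the payoff).} Fix $\pi_{r}(\cdot)$, $\pi_{b}$, and a permissible $\mu \in \mathcal{S}(\bar{\mu}, \delta)$. From Definition~\ref{def:exp}, for every type $\theta$,
\begin{align*}
V(\pi_{r}(\cdot), \pi_{b}; \theta) = \sup_{\pi^{*} \in \Pi_{\mathrm{b}}} V(\pi_{r}(\cdot), \pi^{*}; \theta) - \mathcal{E}(\pi_{r}(\cdot), \pi_{b}; \theta).
\end{align*}
Multiply by $\mu(\theta)$ and sum over the finite set $\Theta$. This gives
\begin{align*}
\mathbb{E}_{\theta \sim \mu}[V(\pi_{r}(\cdot), \pi_{b}; \theta)] = \sum_{\theta \in \Theta} \mu(\theta) \sup_{\pi^{*} \in \Pi_{\mathrm{b}}} V(\pi_{r}(\cdot), \pi^{*}; \theta) - \sum_{\theta \in \Theta} \mu(\theta) \mathcal{E}(\pi_{r}(\cdot), \pi_{b}; \theta).
\end{align*}
This uses that the per-type suprema are finite, which holds if payoffs are bounded, as in Graph CtF with win/loss returns over a finite horizon.

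\textbf{Step 2 (bound the regret term).} Since $\mu \in \mathcal{S}(\bar{\mu}, \delta)$, the $\mu$-weighted exploitability is at most $\sup_{q \in \mathcal{S}(\bar{\mu}, \delta)} \sum_{\theta} q(\theta)\mathcal{E}(\pi_{r}(\cdot), \pi_{b}; \theta)$. Substituting this into the identity of Step 1 flips the sign and yields exactly $\mathcal{V}_{\mathcal{S}}^{(-)}(\pi_{b}, \mu)$, which proves the inequality.

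\textbf{Step 3 (tightness, optional).} By analogy with Lemma~\ref{lemma:ulb_mre}, I would also note when equality holds. $\mathcal{S}(\bar{\mu}, \delta)$ is a polytope: it is the simplex intersected with finitely many linear constraints. It is nonempty, since $\bar{\mu}$ itself satisfies every low-probability cover constraint by construction. It is also compact, and the objective is linear in $q$, so the supremum is attained at some $q^{\star} \in \mathcal{S}(\bar{\mu}, \delta)$. Taking $\mu = q^{\star}$ makes the bound hold with equality.

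\textbf{Expected obstacle.} The only delicate point is the finiteness of $\sup_{\pi^{*}} V$ and of $\mathcal{E}$, which is needed so that no $\infty - \infty$ arises. I would state this boundedness of $V$ explicitly as a standing assumption. Everything else is linearity of expectation.
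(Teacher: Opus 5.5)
Your proof is correct and matches the paper's argument. The paper gives no separate proof of this lemma, but its appendix proof of Lemma~\ref{lemma:ulb_mre} is exactly your Step 1 identity followed by bounding the $\mu$-weighted regret by its supremum, with $\Delta(\lvert\Theta\rvert)$ in place of $\mathcal{S}(\bar{\mu}, \delta)$. Your optional tightness step (nonemptiness via $\bar{\mu} \in \mathcal{S}(\bar{\mu},\delta)$, compactness, attainment at $q^{\star}$, then $\mu = q^{\star}$) and your explicit boundedness assumption on $V$ are both valid, and the paper leaves them implicit.
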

\end{tcolorbox}
It is straightforward to prove the following Lemma that the $\operatorname{PR-MRE}$ robust best-response from (\ref{eq:prmre_br_def}) is equivalent to (\ref{eq:mre_br_def_expanded}) through a water-filling argument by arguing that the inner supremum for any outer $\pi \in \Pi_{b}$ corresponds to a type distribution that places all permissible mass on the highest regret type within the high-confidence $\Theta / \Theta_{r}$ subset while respecting the remaining constraints posed by the low-probability covers. Lemma~\ref{lemma:mmr} shows that the $\operatorname{PR-MRE}$ robust best-response weighs regret over rare types by a maximum of $\delta$ in contrast to the distribution-free $\operatorname{MRE}$ (\ref{eq:mre_br_def}) that treats all types as equally likely and minimizes worst-case regret across the type-space.
\begin{lemma}\label{lemma:mmr}
For a given Red team strategy $\pi_{r}(\cdot)$, the probabilistically-robust MRE best-response operator for the Blue team is defined as follows,
\begin{align}\label{eq:mre_br_def_expanded}
\operatorname{BR}^{\mathrm{(b)}}_{\mathrm{PRMRE}}(\pi_{r}(\cdot); \bar{\mu}, \delta) =  \arginf_{\pi \in \Pi_{\mathrm{b}}} \sup_{\substack{ \mu(\Theta / \Theta_{r}) + \sum_{\theta \in \Theta_{r}} \mu(\theta) = 1, \\ \sum\limits_{\theta \in \mathscr{L}} \mu(\theta) \leq \delta \ \forall \ \mathscr{L} \in \Theta_{\ell}(\bar{\mu}, \delta)}} \ \mu(\Theta / \Theta_{r})\cdot\mathcal{E} \left(\pi_{r}(\cdot), \pi; \Theta / \Theta_{r} \right) + \sum_{\theta \in \Theta_{r}}  \mu(\theta) \cdot \mathcal{E} \left(\pi_{r}(\cdot), \pi; \theta \right),
\end{align}
where $\mathcal{E}(\pi_{r}(\cdot), \pi; \Theta / \Theta_{r})$ refers to the worst-case exploitability of the Blue team strategy $\pi$ over the type-space subset $\Theta / \Theta_{r}$  where $\Theta_{r}$ is the set of all rare types such that under the nominal distribution $\bar{\mu}(\cdot)$, $ \bar{\mu}(\theta) \leq \delta, \ \forall \ \theta \in \Theta_{r}$.
\end{lemma}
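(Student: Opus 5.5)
The plan is to fix an arbitrary outer strategy $\pi \in \Pi_{\mathrm{b}}$ and show that the inner suprema in (\ref{eq:prmre_br_def}) and (\ref{eq:mre_br_def_expanded}) have the same value. Then the two objectives coincide as functions of $\pi$, and their $\arginf$ sets are equal. Write $e_\theta = \mathcal{E}(\pi_{r}(\cdot), \pi; \theta) \geq 0$ and $H = \Theta / \Theta_{r}$. Let $\theta^{\star} \in \arg\max_{\theta \in H} e_\theta$; this exists because $\Theta$ is finite. Assuming $H$ is nonempty, we have $e_{\theta^\star} = \mathcal{E}(\pi_{r}(\cdot), \pi; H)$.

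\emph{Step 1: the expanded objective dominates.} Take any $\mu \in \mathcal{S}(\bar{\mu}, \delta)$. Since $\sum_{\theta \in H} \mu(\theta) e_\theta \leq \mu(H)\, e_{\theta^\star}$, we get $\sum_{\theta \in \Theta} \mu(\theta) e_\theta \leq \mu(H) \mathcal{E}(\pi_r, \pi; H) + \sum_{\theta \in \Theta_r} \mu(\theta) e_\theta$. The variables $(\mu(H), \{\mu(\theta)\}_{\theta \in \Theta_r})$ satisfy the constraints in (\ref{eq:mre_br_def_expanded}): they sum to one, and each cover constraint holds because $\mu$ itself satisfies it. So the supremum in (\ref{eq:prmre_br_def}) is at most the supremum in (\ref{eq:mre_br_def_expanded}).

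\emph{Step 2: water-filling for the reverse inequality.} Take a feasible point of (\ref{eq:mre_br_def_expanded}), i.e.\ masses $m_H$ and $\{\mu(\theta)\}_{\theta \in \Theta_r}$. Construct $\tilde\mu$ by putting all of $m_H$ on $\theta^\star$ and keeping the rare masses unchanged. This $\tilde\mu$ attains the expanded objective exactly. It remains to check $\tilde\mu \in \mathcal{S}(\bar{\mu}, \delta)$, i.e.\ $\tilde\mu(\mathscr{L}) \leq \delta$ for every $\mathscr{L} \in \Theta_{\ell}(\bar\mu,\delta)$.

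\textbf{This is the main obstacle.} A low-probability cover $\mathscr{L}$ may contain non-rare types. In fact every $\theta$ with $\bar\mu(\theta) \leq \delta$ is rare, so every member of any $\mathscr{L}$ is rare, since $\bar\mu(\mathscr{L}) \leq \delta$ forces $\bar\mu(\theta) \leq \delta$ for all $\theta \in \mathscr{L}$. Hence $\mathscr{L} \subseteq \Theta_r$ for all covers, and $\tilde\mu(\mathscr{L}) = \sum_{\theta \in \mathscr{L}} \mu(\theta) \leq \delta$ by the constraint already imposed in (\ref{eq:mre_br_def_expanded}).

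I need to interpret the cover constraints in (\ref{eq:mre_br_def_expanded}) as involving only the rare coordinates. This is legitimate exactly because of the inclusion $\mathscr{L} \subseteq \Theta_r$, which I will state explicitly as a preliminary observation. This gives the reverse inequality, and hence equality of the inner suprema for every $\pi$.

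Two degenerate cases need separate handling. If $H = \emptyset$, then $\mu(H) = 0$ and the first term is dropped. The suprema are attained since the feasible sets are compact polytopes and the objectives are linear, so no limiting argument is needed.
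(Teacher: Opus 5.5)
Your proof is correct and follows the same water-filling argument the paper sketches: for each fixed $\pi$, put all the mass on $\Theta / \Theta_{r}$ onto its highest-regret type, which leaves every cover constraint untouched, so the two inner suprema agree. Your explicit observation that every $\mathscr{L} \in \Theta_{\ell}(\bar{\mu}, \delta)$ satisfies $\mathscr{L} \subseteq \Theta_{r}$ is the detail the paper's one-line sketch leaves implicit. It holds because $\bar{\mu}(\mathscr{L}) \leq \delta$ forces $\bar{\mu}(\theta) \leq \delta$ for each member, and it is what makes the reverse inequality go through.
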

\begin{tcolorbox}[
title=Probabilistically-Robust Minimax-Regret Equilibrium (PR-MRE),
colback=white,
colframe=black!40,
colbacktitle=white,
coltitle=black
]
The objective in (\ref{eq:prmre_br_def}) minimizes worst-case exploitability on a high-confidence subset of the type space while discounting exploitability on atypical types by the confidence parameter $\delta$. From the novel Blue team best-response operator (\ref{eq:prmre_br_def}) as proposed above, we obtain the equilibrium concept -- Probabilistically-Robust Minimax-Regret equilibria (PR-MRE) characterized by the following fixed-point system,
\begin{align}
    \label{eq:prmre_fixed_point}
    \pi^{*}_{b, \mathrm{PRMRE}}(\bar{\mu}, \delta) \in \operatorname{BR}^{\mathrm{(b)}}_{\mathrm{PRMRE}}(\pi^{*}_{r}(\cdot); \bar{\mu}, \delta) \ \text{and} \ \pi^{*}_{r}(\cdot) \in \operatorname{BR}^{\mathrm{(r)}}(\pi^{*}_{b, \mathrm{PRMRE}}(\bar{\mu}, \delta); \cdot).
\end{align}
\end{tcolorbox}
The key robustness guarantee of PR-MRE is given below.
\begin{tcolorbox}[
colback=white,
colframe=black!40,
colbacktitle=white,
coltitle=black
]
\begin{lemma}\label{lemma:prmre_rob_property}
    (Probabilistically-Robust Minimax-Regret Equilibria Robustness Property) For all distributions permissible under the threat model $\mathcal{S}(\bar{\mu}, \delta)$, $\operatorname{PR-MRE}$ provides the tightest lower-bound on expected performance such that $ \mathcal{V}_{\mathcal{S}}^{(-)}( \pi_{b, \mathrm{PRMRE}}, \mu) \geq \mathcal{V}_{\mathcal{S}}^{(-)}(\pi, \mu)$ for all $\mu \in \mathcal{S}(\bar{\mu}, \delta)$, $\pi \in \Pi_{b}$.
\end{lemma}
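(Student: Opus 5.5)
The plan is to use the decomposition of $\mathcal{V}_{\mathcal{S}}^{(-)}(\pi,\mu)$ from Lemma~\ref{lemma:threat_model_perf_lb}: for fixed $\pi_{r}(\cdot)$, it is the difference of two terms. The first term is
\[
A(\mu)=\sum_{\theta\in\Theta}\mu(\theta)\sup_{\pi^{*}\in\Pi_{b}}V(\pi_{r},\pi^{*};\theta),
\]
which depends only on $\mu$ and not on the Blue decision variable. The second term is the robust regret
\[
R(\pi)=\sup_{q\in\mathcal{S}(\bar{\mu},\delta)}\sum_{\theta\in\Theta}q(\theta)\,\mathcal{E}(\pi_{r}(\cdot),\pi;\theta),
\]
which depends only on $\pi$ and not on $\mu$. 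The key structural step is therefore the identity $\mathcal{V}_{\mathcal{S}}^{(-)}(\pi,\mu)=A(\mu)-R(\pi)$ for every $\mu\in\mathcal{S}(\bar{\mu},\delta)$ and every $\pi\in\Pi_{b}$. This separability is what lets one Blue strategy be simultaneously optimal for every admissible $\mu$, mirroring the argument given after Lemma~\ref{lemma:ulb_mre} for MRE.

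Next, by Definition~\ref{def:prmre}, $\pi_{b,\mathrm{PRMRE}}\in\operatorname{BR}^{\mathrm{(b)}}_{\mathrm{PRMRE}}(\pi_{r}(\cdot);\bar{\mu},\delta)$ means $\pi_{b,\mathrm{PRMRE}}\in\arginf_{\pi\in\Pi_{b}}R(\pi)$, that is, $R(\pi_{b,\mathrm{PRMRE}})\le R(\pi)$ for all $\pi\in\Pi_{b}$. Fix any $\mu\in\mathcal{S}(\bar{\mu},\delta)$ and $\pi\in\Pi_{b}$. Subtracting both sides from $A(\mu)$ gives $\mathcal{V}_{\mathcal{S}}^{(-)}(\pi_{b,\mathrm{PRMRE}},\mu)\ge\mathcal{V}_{\mathcal{S}}^{(-)}(\pi,\mu)$, which is the claimed inequality. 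In the equilibrium reading (\ref{eq:prmre_fixed_point}), the Red strategy is held fixed at $\pi^{*}_{r}$, so the argument applies verbatim with $\pi_{r}=\pi^{*}_{r}$.

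To justify calling this the ``tightest'' bound, I would add a short tightness remark. The quantity $\mathcal{V}_{\mathcal{S}}^{(-)}$ is a valid lower bound by Lemma~\ref{lemma:threat_model_perf_lb}, and it is attained. Since $\mathcal{S}(\bar{\mu},\delta)$ is a nonempty compact polytope (it contains $\bar{\mu}$) and the objective is linear in $q$, the supremum defining $R(\pi)$ is achieved at some $q^{\star}$, for instance the water-filling vertex described in Lemma~\ref{lemma:mmr}. For that $q^{\star}$,
\[
\mathbb{E}_{q^{\star}}[V]=A(q^{\star})-\sum_{\theta}q^{\star}(\theta)\,\mathcal{E}(\theta)=A(q^{\star})-R(\pi),
\]
so equality holds at $\mu=q^{\star}$. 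The Blue strategy maximizing this attainable bound uniformly in $\mu$ is exactly the PR-MRE response.

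The main obstacle is not the inequality, which is immediate once the decomposition is in place. It is making sure the objects are well defined: $\sup_{\pi^{*}}V$ must be finite so that $A(\mu)$ is finite (bounded returns over a finite horizon suffice), and the $\arginf$ defining $R$ must be nonempty. If attainment of the $\arginf$ fails, I would state the result for $\varepsilon$-minimizers, which yields the inequality up to $\varepsilon$.
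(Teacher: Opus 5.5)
Your proof is correct and follows the paper's argument: the paper also writes $\mathcal{V}_{\mathcal{S}}^{(-)}(\pi,\mu)$ as a $\pi$-independent term minus the robust regret $\sup_{q\in\mathcal{S}(\bar{\mu},\delta)}\bar{\mathcal{E}}(\pi,q)$, identifies that robust regret at $\pi_{b,\mathrm{PRMRE}}$ with $\inf_{\pi'}\sup_{q}\bar{\mathcal{E}}(\pi',q)$, and bounds it by the value at an arbitrary $\pi''$. Your attainment remark (equality at a maximizing $q^{\star}\in\mathcal{S}(\bar{\mu},\delta)$) and your well-posedness caveats are sound additions that the paper's proof does not spell out.
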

\begin{proof}
    Proof in Appendix~\ref{sec:appendix_proofs}.
\end{proof}
\end{tcolorbox}
A corollary of Lemma~\ref{lemma:prmre_rob_property} is that $\operatorname{PR-MRE}$, which minimizes a weighted sum of exploitabilities over high-confidence subsets of the type-space, obtains a tighter performance lower bound on permissible distributions under the threat model as compared to $\operatorname{MRE}$, which minimizes worst-case exploitability over the entire type-space, $ \mathcal{V}_{\mathcal{S}}^{(-)}( \pi_{b, \mathrm{PRMRE}}, \mu ) \geq \mathcal{V}_{\mathcal{S}}^{(-)}(\pi_{b, \mathrm{MRE}}, \mu ) \ \forall \ \mu \in \mathcal{S}(\bar{\mu}, \delta)$.

This motivates $\operatorname{PR\text{-}MRE}$ as an appropriate robust equilibrium concept under asymmetric information, providing a tighter lower bound on expected performance under admissible distribution shifts than distributionally robust, worst-case, and minimax-regret equilibrium concepts.
In the Graph CtF setting, this corresponds to favoring strategies that remain robust across plausible flag hypotheses while avoiding excessive conservatism induced by highly atypical flag locations.

The tighter lower bound does not come for free, as we establish in Section~\ref{sec:computation} that the computation of $\operatorname{PR-MRE}$ for a metagame approximation of the ATG can be formulated as a robust bilinear program that couples regrets across types as compared to a simpler bilinear program for $\operatorname{MRE}$ and the well-known linear programming formulation for $\operatorname{BNE}$. This is due to the fact that $\operatorname{PR-MRE}$ reasons about which regions of the type-space to minimize regret over as compared to $\operatorname{MRE}$ that minimizes worst-case regret over the entire type-space. 

\textbf{Programming Formulation}

Unlike BNE, whose best response is a single optimization over expected utility, regret based reasoning introduces a nested optimization over type specific optima. Consequently, the optimization couples regrets across types, leading to a robust bilinear program rather than the linear programs familiar from Bayesian equilibria. We next derive this formulation and subsequently develop a tractable semidefinite relaxation for use within PSRO.
\begin{tcolorbox}[
colback=white,
colframe=black!40,
colbacktitle=white,
coltitle=black]
From (\ref{eq:prmre_br_def}) and Definition~\ref{def:threat_model}, the PR-MRE best-response operator can be reformulated as follows,

\begin{minipage}[t]{0.45\linewidth}
\begin{align}
\inf_{\pi \in \Pi_{\mathrm{b}}} \sup_{\substack{\mu \in \Delta(\lvert\Theta\rvert) \\ \sum\limits_{\theta \in \mathscr{L}} \mu(\theta) \leq \delta \ \forall \ \mathscr{L} \in \Theta_{l}(\bar{\mu}, \delta) }} \mu^{\top}\varepsilon(\pi)  \notag
\end{align}
\end{minipage}
\hfill\raisebox{-4em}{\Large$\Longleftrightarrow$}\hfill
\begin{minipage}[t]{0.45\linewidth}
\begin{align} 
\inf_{\tilde{\varepsilon}, \pi} \quad &\tilde{\varepsilon} \notag \\ \sup_{\substack{\mu \in \Delta(\lvert\Theta\rvert) \\ \sum\limits_{\theta \in \mathscr{L}} \mu(\theta) \leq \delta \ \forall \ \mathscr{L} \in \Theta_{l}(\bar{\mu}, \delta) }} \hspace{-1em} \mu^{\top}&\varepsilon(\pi)  = \tilde{\varepsilon}(\pi) \leq \tilde{\varepsilon} \label{eq:prmre_br_reformulation}
\end{align}
\end{minipage}
where $\mu^{\top}\varepsilon(\pi)  = \sum_{\theta \in \Theta} \mu(\theta) \mathcal{E} \left(\pi_{r}(\cdot), \pi; \theta \right)$ is the expected regret under $\mu(\cdot)$ and Blue team policy $\pi$.
\end{tcolorbox}
The robust average regret condition couples regrets across types. Moreover, optimizing over regrets is a nested optimization problem as opposed to BNE, distributionally-robust and worst-case notions of best-response since to optimize the regret, deviation from type-wise optimal solutions is minimized. For $\delta = 0$, $ \Theta_{l}(\bar{\mu}, 0) = \Phi$ such that robust average regret under the threat model reduces to worst-case regret corresponding to MRE best-response.
The nested infimum and supremum can be reformulated as shown above. Eq.~(\ref{eq:prmre_br_reformulation}) is a robust formulation for expected regret under type distributions permissible under the threat model.
\begin{tcolorbox}[
title=PR-MRE Best-Response Programming Formulation,
colback=white,
colframe=black!40,
colbacktitle=white,
coltitle=black
]
For Red team strategy $\pi_{r}(\cdot)$,
\begin{align}
\inf_{\tilde{\varepsilon}, \pi} \quad &\tilde{\varepsilon} \tag{PRMRE-BR} \\ 
\delta \lambda^{\top}\mathbf{1}_{\lvert \mathscr{L} \rvert \times 1} &+ \lambda_{\mathrm{u}} \leq \tilde{\varepsilon}, \\
-\varepsilon^{\top}(\pi) + \lambda^{\top} \mathcal{A}_{\mathscr{L}} &- \lambda^{\top}_{\mathrm{p}} + \lambda_{\mathrm{u}}\mathbf{1} = 0, \\
\lambda \in \mathbb{R}^{\lvert\mathscr{L}\rvert}_{+}, \lambda_{\mathrm{p}} \in  &\mathbb{R}^{\lvert\Theta\rvert}_{+}; \ \lambda_{\mathrm{u}} \in \mathbb{R},
\end{align}
\vspace{-2.75em}
\begin{align}
    \text{where } \varepsilon(\pi) = \left[ \sup\limits_{\pi^{*} \in \Pi_{b}} V(\pi_{r}(\cdot), \pi^{*}; \theta) - V(\pi_{r}(\cdot), \pi; \theta)  \right]_{\theta \in \Theta},
\end{align}
is the regret vector corresponding to Blue team policy $\pi$.
\end{tcolorbox}
From the Red team best-response operator defined in Definition~\ref{assumption:red_team_rationality} and the PR-MRE equilibrium (\ref{eq:prmre_fixed_point}) defined as the fixed-point of the Blue team and Red team best-response operators under asymmetric information, we obtain the following programming formulation for the PR-MRE equilibrium.
\begin{tcolorbox}[
title=PR-MRE Equilibrium Programming Formulation,
colback=white,
colframe=black!40,
colbacktitle=white,
coltitle=black
]
\vspace{-1.25em}
\begin{align}
\inf_{\tilde{\varepsilon}, \pi_{\mathrm{PRMRE}}, \pi^{*}_{r}(\cdot)} \quad &\tilde{\varepsilon} \tag{PRMRE-EQM} \\
\delta \lambda^{\top}\mathbf{1}_{\lvert \mathscr{L} \rvert \times 1} + &\lambda_{\mathrm{u}} \leq \tilde{\varepsilon},
\end{align}
\vspace{-2.75em}
\begin{align}
 \varepsilon(\pi_{\mathrm{PRMRE}}) &= \left[ \sup\limits_{\pi^{*} \in \Pi_{b}}  V(\pi_{r}(\cdot), \pi^{*}; \theta) - V(\pi_{r}(\cdot), \pi_{\mathrm{PRMRE}}; \theta) \right]_{\theta \in \Theta}, \\
    -\varepsilon^{\top}&(\pi_{\mathrm{PRMRE}}) + \lambda^{\top} \mathcal{A}_{\mathscr{L}} - \lambda^{\top}_{\mathrm{p}} + \lambda_{\mathrm{u}}\mathbf{1} = 0, \\
    V(\pi^{*}_{r}(\cdot), \pi_{\mathrm{PRMRE}}; \theta) &\geq V(\pi_{r}(\cdot), \pi_{\mathrm{PRMRE}}; \theta) \text{ for all } \pi_{r}(\cdot) \in \Pi_{r} \text{ and } \theta \in \Theta.
\end{align}
\end{tcolorbox}
\textbf{Illustrative Example}

Consider an example normal-form game to illustrate the distinction between $\operatorname{PR-MRE}$ and the various equilibrium concepts discussed in Section~\ref{sec:atg}. Consider the $1\times5\times3$ game from Table~\ref{tab:1row-game} with type-space $\Theta = \{\theta_{0}, \theta_{1}, \theta_{2}\}$, five Blue actions $b_{0}$ to $b_{4}$ on the columns, a single Red action $r_{0}$ as the row with three rows one for each type and a nominal prior on types $ \bar{\mu} = (0.25, 0.70, 0.05)$. We use a single Red action in this illustrative example so that the informed Red best-response is unique, allowing differences between equilibrium fixed-points to be attributed solely to the Blue team's varied robust best-responses. A more complicated 2×5×3 example with the added complexity of Red's best-response for determining the respective equilibrium fixed-points is provided in Appendix~\ref{sec:illus_example_appendix} (see Table~\ref{tab:2row-game}). From the example game shown below in Table~\ref{tab:1row-game}, action $b_{0}$ is a $\theta_{0}$-specialist such that $V^{*}(\theta_{0}) = V(b_{0}; \theta_{0})$, $b_{1}$ is a $\theta_{1}$-specialist and $b_{2}$ is a $\theta_{2}$-specialist that is rare under the nominal ($\bar{\mu}(\theta_{2}) = 0.05$). Moreover, actions $b_{0}$ and $b_{1}$ are catastrophic on the rare type $\theta_{2}$ and action $b_{2}$ is catastrophic on typical types $\theta_{0}$ and $\theta_{1}$. Actions $b_{3}$ and $b_{4}$ are generalists such that they are not too bad for any particular type. In other words, $b_{3}$ and $b_{4}$ have markedly better worst-case regret over types as compared to the first-three specialist columns, 1.1 and 1.6 against 2.1, 2.3 and 2.0 for $b_{0}$, $b_{1}$ and $b_{2}$ respectively. Moreover, $b_{3}$ is the best generalist with lowest worst-case regret of $1.1$ across types (see column $b_{3}$ in Table~\ref{tab:1row-game}).

From the taxonomy on robust Blue best-responses under asymmetric information and associated equilibria in Tables~\ref{tab:taxonomy} and \ref{tab:rob_properties}, each best-response corresponds to a degree of trust in Nature's nominal type distribution and an optimization objective resulting in a unique robustness property. We now contrast $\operatorname{PR-MRE}$ and the various equilibrium concepts with respect to robustness to shifts from the nominal on the example game and visualize their comparative advantage over different regions of the type-simplex (see Figure~\ref{fig:1row_triangles} and Table~\ref{tab:1row-game}). The nominal type distribution $(0.25, 0.70, 0.05)$ represents a single point in the triangular barycentric representation of the three-dimensional simplex (Figure~\ref{fig:1row_triangles}) and we evaluate the performance (Blue payoff) of the Blue decision corresponding to different robust equilibrium fixed-points as the test type distribution probes over the entire simplex. 

\textbf{Evaluation Protocol}: For our comparison and analysis (Figures~\ref{fig:1row_triangles} and \ref{fig:1row_tables}), the performance of the Blue decision corresponding to different equilibria for any test type distribution is measured against an adaptive Red decision, specifically the Red $\operatorname{BNE}$ corresponding to the test type distribution (Red's decision corresponding to the BNE fixed-point for the test type distribution). Figure~\ref{fig:1row_triangles} shows the winner map and payoff difference heatmaps for $\operatorname{PR-MRE}$ and the different methods against such a Red opponent that adapts as the test type distribution probes over the simplex. For the single Red action example in Table~\ref{tab:1row-game}, the Red decision is unique everywhere by construction and the robust Blue decisions are evaluated against variations in just the Nature's decision / type distribution (Red decision is same everywhere). For the $2\times5\times3$ example in Appendix~\ref{sec:illus_example_appendix}, evaluation and visualization is done against an adaptive Red opponent.

\begin{table}[t]
\caption{Example $1{\times}5{\times}3$ game. Columns $\{b_{0} \cdots b_{4}\}$ are Blue actions, row $r_{0}$ corresponds to the Red action and there are three total rows, one per type $\theta \in \{\theta_{0}, \theta_{1}, \theta_{2}\}$ with a nominal distribution $\bar{\mu} = (0.25, 0.70, 0.05)$. Each cell shows the Blue payoff $V(r, b; \theta)$, shorthand $V(b; \theta)$ for the unique Red choice $r_{0}$, the type-conditioned optimum values $V^{*}(\theta)$ read left to right per $\theta$ row are marked in bold (black), and the
per-type regret for any Blue action (or column) given by  $\varepsilon(b; \theta) = V^{*}(\theta) - V(b; \theta)$ is in red subscript. Worst-case regret over types $\sup_{\theta \in \Theta} \varepsilon(b; \theta)$ for any Blue action $b$ is read top to bottom per column and marked in bold (red).}
\vspace{-1.5em}
\centering
\[
\setlength{\arraycolsep}{6pt}
\renewcommand{\arraystretch}{1.3}
\begin{array}{r |c|c|c|c|c| l}
 \multicolumn{1}{r}{}
   & \multicolumn{1}{c}{b_0} & \multicolumn{1}{c}{b_1} & \multicolumn{1}{c}{b_2}
   & \multicolumn{1}{c}{b_3} & \multicolumn{1}{c}{b_4} & \multicolumn{1}{l}{} \\[4pt]
\cline{2-6}
 r_0 &  \textbf{1.5}_{\ \color{red}{0.0}} &  0.6_{\ \color{red}{0.9}} & -0.5_{\ \color{red}{\textbf{2.0}}} &  0.4_{\ \color{red}{\textbf{1.1}}} &  0.7_{\ \color{red}{0.8}} & \big\}\;\theta_0, 0.25 \\
\cline{2-6}
\noalign{\vskip 7pt}
\cline{2-6}
 r_0 &  0.3_{\ \color{red}{0.7}} &  \textbf{1.0}_{\ \color{red}{0.0}} & -0.8_{\ \color{red}{1.8}} &  0.8_{\ \color{red}{0.2}} &  0.8_{\ \color{red}{0.2}} & \big\}\;\theta_1, 0.70 \\
\cline{2-6}
\noalign{\vskip 7pt}
\cline{2-6}
 r_0 & -0.6_{\ \color{red}{\textbf{2.1}}} & -0.8_{\ \color{red}{\textbf{2.3}}} &  \textbf{1.5}_{\ \color{red}{0.0}} &  0.5_{\ \color{red}{1.0}} & -0.1_{\ \color{red}{\textbf{1.6}}} & \big\}\;\theta_2, 0.05 \\
\cline{2-6}
\end{array}
\]
\label{tab:1row-game}
\end{table}

\begin{figure}[h] 
  \centering
  \includegraphics[scale=0.25]{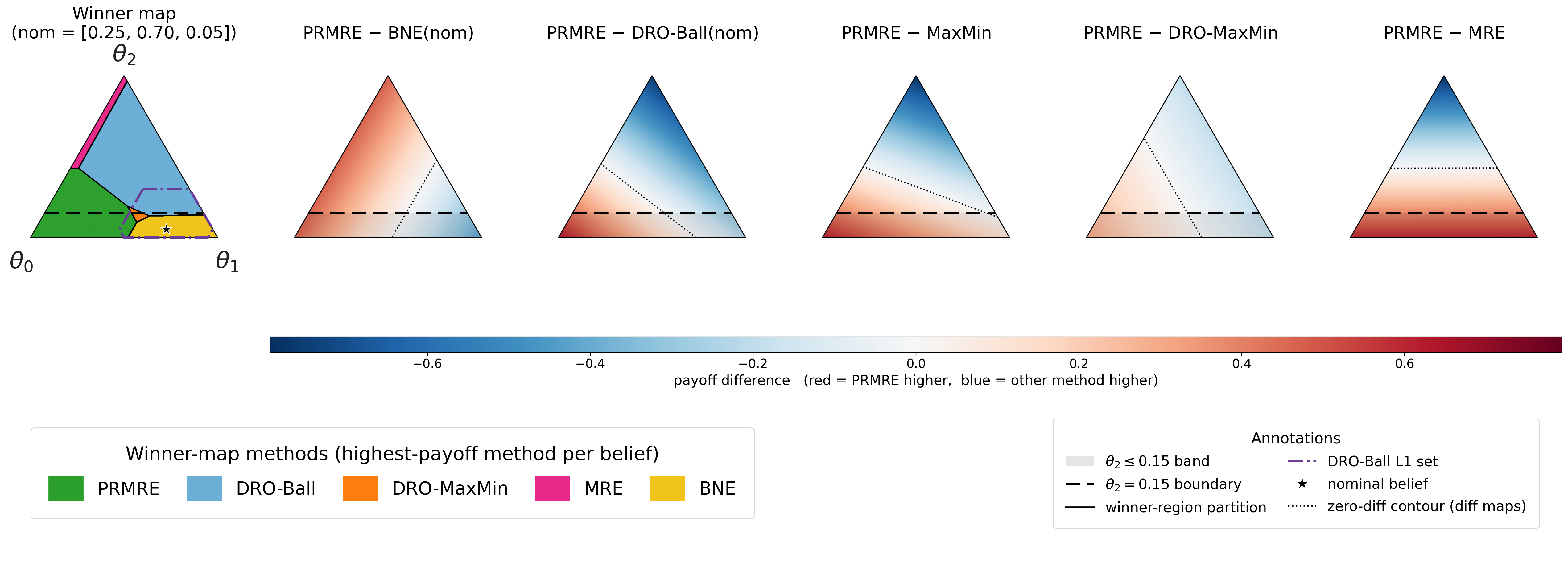}
  \vspace{-1.5em}
  \caption{Comparison of $\operatorname{PRMRE}$ ($\delta = 0.15$) with $\operatorname{BNE}$, $\operatorname{DRO-Ball}$, $\operatorname{MaxMin}$, $\operatorname{DRO-MaxMin}$ and $\operatorname{MRE}$  for the synthetic three-type game in Table~\ref{tab:1row-game}: winner map (first triangle) and payoff difference heatmaps (remaining triangles). Triangles represent three-dimensional simplex in barycentric coordinates with vertices corresponding to full probability mass on the respective types. Black dotted line and the trapezoidal band beneath it represent the high-confidence under nominal structured type-ambiguity set $\mu(\theta_{2}) \leq 0.15$, the star in the winner map corresponds to the nominal distribution $\bar{\mu} = (0.25, 0.70, 0.05)$ and the dotted hexagon centered at the star corresponds to an $L1$-ball with radius $\rho=0.5$ around it. For the trapezoidal band, $\operatorname{PR-MRE}$ dominates all other methods which can be seen from the green volume share in the winner-map and the red volume share in the payoff difference heatmaps. $\operatorname{BNE}$ and $\operatorname{DRO-Ball}$ perform better close to the nominal ($\theta_{1}$-corner, mass 0.70 under nominal) but lose out on performance to $\operatorname{PR-MRE}$ for a type distribution chosen at random within the trapezoidal band. $\operatorname{PR-MRE}$ has the best average-case performance over the structured ambiguity set as compared to any other method (see Figure~\ref{fig:1row_tables} for the numerical result, consistent with visual inspection of the winner-map and payoff difference heatmaps).}
\label{fig:1row_triangles}
\end{figure}
\textbf{Remark}: We note that we compare $\operatorname{PR-MRE}$ to two types of distributionally-robust equilibria, $\operatorname{DRO-Ball}$ based on an $L1$-ball around the nominal (local trust) and $\operatorname{DRO-MaxMin}$ based on the structured type ambiguity induced by the threat model in Definition~\ref{def:threat_model} (coarse trust) that preserves the set of all high-probability covers under admissible distribution shifts. This provides a direct comparison of our regret-based approach  $\operatorname{PR-MRE}$  to the payoff-based distributionally-robust approach $\operatorname{DRO-MaxMin}$  over the same structured ambiguity set.

$\operatorname{BNE}$, $\operatorname{DRO-Ball}$, $\operatorname{MaxMin}$ and $\operatorname{DRO-MaxMin}$ are all computed via linear-programs whereas $\operatorname{MRE}$ and $\operatorname{PR-MRE}$ ($\delta = 0.15$) are computed via semidefinite relaxations to the corresponding bilinear and robust bilinear programs as derived in Section~\ref{sec:computation} for normal form games under asymmetric information. For the example game in Table~\ref{tab:1row-game}, the computed $\operatorname{BNE}$ Blue decision concentrates on the specialist $b_{1}$ for the highest-confidence type $\theta_{1}$ under nominal $\bar{\mu} = (0.25, 0.70, 0.05)$ as $(0, 1, 0, 0, 0)$ whereas the computed $\operatorname{DRO-Ball}$ decision for an $L1$-ball of radius $\rho = 0.5$ around the nominal (dotted hexagon around the nominal star in Figure~\ref{fig:1row_triangles}) hedges probability mass across generalist actions $b_{3}, b_{4}$ as $(0, 0, 0, 0.89, 0.11)$. $\operatorname{MRE}$ that discards probabilistic information from the nominal and minimizes worst-case regret across all types results in a decision $(0.39, 0, 0.39, 0.22, 0)$ with $0.22$ mass to the best generalist $b_{3}$ and a significant probability mass of $0.39$ to the rare-type specialist $b_{2}$. This comes at the expense of performance over typical types $\theta_{0}$ and $\theta_{1}$ as the poorest worst-case and average payoffs over the high-confidence under nominal structured type ambiguity set $\mu(\theta_{2}) \leq 0.15$ (the black dotted line and the trapezoidal band beneath it in Figure~\ref{fig:1row_triangles}). The $\operatorname{MaxMin}$ decision that assumes an adversarial type distribution (Section~\ref{sec:wc_eqm}) and optimizes for worst-case payoff across the type-simplex is the only other decision besides $\operatorname{MRE}$ that puts mass on the rare-type specialist $b_{2}$ as $(0.18, 0, 0.16, 0.66, 0)$. In contrast to distribution-free $\operatorname{MRE}$, $\operatorname{PR-MRE}$ utilizes coarse information from the nominal and optimizes for worst-case regret across a structured type ambiguity set that respects typicality of types. For $\delta = 0.15$, the $\operatorname{PR-MRE}$ best-response discounts regret incurred on the rare-type $\theta_{2}$  (nominal mass $0.05$) by $0.15$ at maximum (follows from Lemma~\ref{lemma:mmr}) such that the $\operatorname{PR-MRE}$ decision for our example game is $(0.46, 0, 0, 0, 0.54)$.

\begin{figure}[ht]
  \centering
  \includegraphics[scale=0.3]{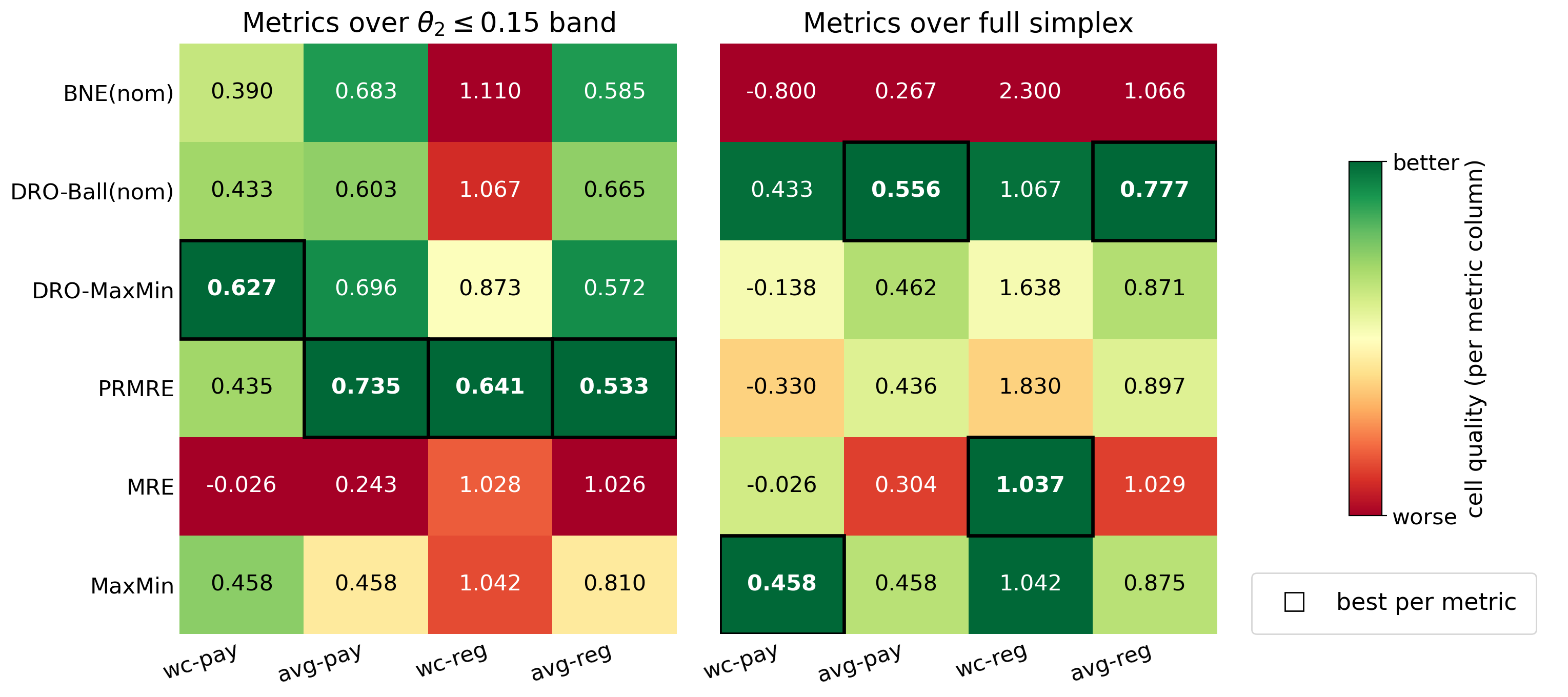}
  \caption{Comparison of $\operatorname{PRMRE}$ ($\delta = 0.15$) with $\operatorname{BNE}$, $\operatorname{DRO-Ball}$, $\operatorname{MaxMin}$, $\operatorname{DRO-MaxMin}$ and $\operatorname{MRE}$  for the synthetic three-type game in Table~\ref{tab:1row-game} with respect to average and worst-case, payoff and regret over the structured ambiguity set $\mu(\theta_{2}) \leq 0.15$ and the entire type-simplex. Over $\mu(\theta_{2}) \leq 0.15$ (trapezoidal band in Figure~\ref{fig:1row_triangles}), $\operatorname{PR-MRE}$ has the best worst-case regret as compared to any other method. $\operatorname{DRO-MaxMin}$ on the other hand beats $\operatorname{PR-MRE}$ on worst-case payoff over the $\mu(\theta_{2}) \leq 0.15$ band which is consistent with its payoff-based objective (0.627 vs 0.435) but loses average-case performance for a type distribution chosen at random from the trapezoidal band such that $\operatorname{PR-MRE}$ attains the highest average payoff (0.735) as compared to any other method (0.696 for $\operatorname{DRO-MaxMin}$). The regret-based objective of the $\operatorname{PR-MRE}$ best-response provides the best point-wise lower bound on payoff for type distributions belonging to the structured ambiguity set (as stated in Lemma~\ref{lemma:prmre_rob_property}) which in this example game manifests as a superior average-case performance for a type-distribution chosen at random from the $\mu(\theta_{2}) \leq 0.15$ band.}
  \label{fig:1row_tables}
\end{figure}
The $\operatorname{PR-MRE}$ decision favours $b_{4}$ over $b_{3}$ despite having a higher worst-case regret over the entire type-space (1.6 vs 1.1 for $\operatorname{MRE}$) due to higher payoff and lower worst-case regret as compared to $b_{3}$ when restricted to typical types $\theta_{0}$ and $\theta_{1}$. Blue action $b_{3}$ is the best overall generalist and $b_{4}$ is a more specialized generalist that performs better over regions which matter probabilistically. The payoff-based $\operatorname{DRO-MaxMin}$ decision $(0.08, 0, 0, 0, 0.92)$ concentrates over an adversarial type distribution within the structured type ambiguity set as compared to the regret-based $\operatorname{PR-MRE}$ that hedges across the $\mu(\theta_{2}) \leq 0.15$ trapezoidal band by having roughly equal mass $0.46$ to the $\theta_{0}$-corner that constitutes the left side of the band in Figure~\ref{fig:1row_triangles}. From Figure~\ref{fig:1row_tables}, $\operatorname{PR-MRE}$ has the best average-case performance over the structured ambiguity set as compared to any other method which is consistent with a visual inspection of the green share in the winner-map and the red share in the payoff difference heatmaps in Figure~\ref{fig:1row_triangles}. Overall, the example game illustrates that the taxonomy in Table~\ref{tab:taxonomy}, defined by the optimization objective and degree of trust in Nature's nominal type distribution, induces qualitatively different robust Blue best-responses and corresponding equilibria, resulting in distinct robustness properties and complementary regions of advantage across the type simplex. We refer the reader to Appendix~\ref{sec:illus_example_appendix} for results on a more complicated $2\times5\times3$ example game that extends the illustrative $1\times5\times3$ game by restoring the informed Red player's strategic choices when determining the corresponding robust equilibrium fixed points.

\subsection{Computation of Probabilistically-Robust Minimax-Regret Equilibria via Robust Double Oracle}
\label{sec:computation}
Computing equilibria in graph-structured adversarial team games is already intractable due to the size of the extensive-form representation, necessitating approximation methods such as CFR, PPO, and double-oracle approaches. In contrast, PR-MRE introduces a regret-based robustness objective over structured type uncertainty, which does not admit standard sequence-form linear programming formulations or straightforward extensions of CFR-style regret updates. 
To address this challenge, we derive a tractable optimization formulation of PR-MRE together with a semidefinite relaxation that can be embedded within a population-based learning framework.

\textbf{For Finite Normal-Form Games}
\begin{tcolorbox}[
title=PR-MRE Best-Response Programming Formulation for Normal-Form Games,
colback=white,
colframe=black!40,
colbacktitle=white,
coltitle=black
]
For a finite normal-form game defined by the payoff tensor $\{ A(\cdot, \cdot; \theta) \}_{\theta \in \Theta} \in \mathbb{R}^{\lvert\mathcal{X}\rvert \times \lvert\mathcal{Y}\rvert \times \lvert\Theta\rvert}$ and the omniscient Red strategy $x(\cdot): \Theta \rightarrow \Delta(\lvert\mathcal{X}\rvert)$, the Blue PR-MRE best-response $y_{\mathrm{PRMRE}} \in \Delta(\lvert\mathcal{Y}\rvert)$ denoted by $y_{\mathrm{PRMRE}} \longleftarrow \operatorname{PRMRE-BR}\{A, x(\cdot)\}$ can be formulated as the following program,
\begin{align}
\inf_{\tilde{\varepsilon}, y_{\mathrm{PRMRE}}} \quad &\tilde{\varepsilon} \tag{PRMRE-BR-NF} \label{eq:prmre_br_normal_form} \\ 
\delta \lambda^{\top}\mathbf{1}_{\lvert \mathscr{L} \rvert \times 1} &+ \lambda_{\mathrm{u}} \leq \tilde{\varepsilon}, \\
x(\theta)A(\theta)e_{j} - x(\theta)A(\theta)y_{\mathrm{PRMRE}} \leq \{ \lambda^{\top} \mathcal{A}_{\mathscr{L}} &- \lambda^{\top}_{\mathrm{p}} + \lambda_{\mathrm{u}}\mathbf{1} \}_{\theta} \text{ for all } j \in \left[\lvert\mathcal{Y}\rvert\right] \text{ and } \theta \in \Theta, \\
\lambda \in \mathbb{R}^{\lvert\mathscr{L}\rvert}_{+}, \lambda_{\mathrm{p}} \in  \mathbb{R}^{\lvert\Theta\rvert}_{+}, \ &\lambda_{\mathrm{u}} \in \mathbb{R}, y_{\mathrm{PRMRE}} \in \Delta(\lvert\mathcal{Y}\rvert).
\end{align}
\end{tcolorbox}
\begin{tcolorbox}[
title=PR-MRE Equilibrium Bilinear Program for Normal-Form Games,
colback=white,
colframe=black!40,
colbacktitle=white,
coltitle=black
]
For a finite normal-form game defined by the payoff tensor $\{ A(\cdot, \cdot; \theta) \}_{\theta \in \Theta} \in \mathbb{R}^{\lvert\mathcal{X}\rvert \times \lvert\mathcal{Y}\rvert \times \lvert\Theta\rvert}$, the PR-MRE fixed-point $\sigma_{\mathrm{PRMRE}} = (x(\cdot), y)$ denoted by $\sigma_{\mathrm{PRMRE}} \longleftarrow \operatorname{PRMRE-EQM}\{A\}$ is given by the following bilinear program,
\begin{align}
\inf_{\tilde{\varepsilon}, y, x(\cdot)} \quad &\tilde{\varepsilon} \tag{PRMRE-EQM-NF} \label{eq:prmre_eqm_normal_form} \\ 
\delta \lambda^{\top}\mathbf{1}_{\lvert \mathscr{L} \rvert \times 1} &+ \lambda_{\mathrm{u}} \leq \tilde{\varepsilon}, \\
x(\theta)A(\theta)e_{j} - x(\theta)A(\theta)y \leq \{ \lambda^{\top} \mathcal{A}_{\mathscr{L}} &- \lambda^{\top}_{\mathrm{p}} + \lambda_{\mathrm{u}}\mathbf{1} \}_{\theta} \text{ for all } j \in \left[\lvert\mathcal{Y}\rvert\right] \text{ and } \theta \in \Theta, \label{eq:y_switch} \\
-x(\theta)A(\theta)y \geq -e_{k} A(\theta) &y \text{ for all } k \in [\lvert\mathcal{X}\rvert] \text{ and } \theta \in \Theta, \label{eq:x_switch} \\
\lambda \in \mathbb{R}^{\lvert\mathscr{L}\rvert}_{+}, \lambda_{\mathrm{p}} \in  \mathbb{R}^{\lvert\Theta\rvert}_{+}; \ \lambda_{\mathrm{u}} \in \mathbb{R}, y &\in \Delta(\lvert\mathcal{Y}\rvert), x(\theta) \in \Delta(\lvert\mathcal{X}\rvert) \text{ for all } \theta \in \Theta.
\end{align}
\end{tcolorbox}
\textbf{Rank-constrained Semidefinite Reformulation of \ref{eq:prmre_eqm_normal_form}}:
The details on the rank-constrained semidefinite formulation and convex relaxation are relegated to Section~\ref{sec:prmre_sdp_relax}.

\textbf{Robust Double Oracle for Asymmetric Information Games}

Double-oracle (DO) \cite{mcmahan2003planning} is an algorithm for computing Nash Equilibrium (NE) in two-player zero-sum normal-form games. The algorithm proceeds by maintaining a population of strategies for each player, referred to as a metagame that abstracts away the full game, computes a NE over the metagame and iteratively adds strategies to the player populations by best-responding to the opponent meta-equilibria. The DO approach in principle recovers the NE for the full game. Policy Space Response Oracles (PSRO) \cite{psro_silver} extends DO to large games by employing reinforcement-learning to compute best-responses. The metagame NE is computed on the empirical game matrix derived from the matchup between each policy in the population set against all opponent policies and tracking the average payoff in a metagame payoff matrix.

As discussed in Section~\ref{sec:eqm_concepts_section}, NE (BNE) is risk-neutral and offers no robustness and performance guarantees under strategic distribution shifts for asymmetric information games such as Graph Capture-the-Flag where Nature's type is hidden from the Blue team and Red can condition it's play on the realized type and / or potentially control the type distribution. For the DO- and PSRO- policy expansion steps, best-responding to a NE (BNE) meta-equilibria over the opponent policy set might add brittle team policies to the population that suffer under distribution shift. Existing work on incorporating robustness at the metagame level and risk-aware policy expansion is limited \cite{slumbers_rae_psro} and does not formally reason about exploitability of the population under hidden information and strategic shifts.

\todo{brittle to distribution shift NE expansion; incorporating risk such as Slumbers; no formal methodology that reasons about exploitability and exploitability minimazation based expansion. MRE and PR-MRE are concepts that formalize robustness to distribution shifts in the form of uniform performance lower bounds over the simplex and typical-preserving distribution shifts respectively.}
\begin{algorithm}[t]
\caption{PR-MRE Robust Double Oracle}
\label{alg:doubleoracle}
\begin{algorithmic}[1]
\State \textbf{Result:} Probabilistically-Robust Minimax-Regret Equilibrium
\State \textbf{Input:} $\Theta$, $\bar{\mu}$, $\delta$, Initial population $\Pi^0 = (\mathcal{X}^{0}, \mathcal{Y}^{0})$, Normal-form game payoff tensor $A_{\mathcal{X} \times \mathcal{Y}} \in \mathbb{R}^{\lvert\mathcal{X}\rvert \times \lvert\mathcal{Y}\rvert \times \lvert\Theta\rvert}$

\Repeat
    \State $x^{\mathrm{k}}(\cdot), y^{\mathrm{k}} \gets \text{\ref{eq:prmre_eqm_normal_form}}\{A_{\Pi^{\mathrm{k}}}\}$

    \For{$i \in \{\mathrm{R}, \mathrm{B}\}$}

        \If{$i \text{ is } \mathrm{B}$}
            \State $y^{\mathrm{k + 1}} \gets \text{\ref{eq:prmre_br_normal_form}}\{ A_{\mathcal{X}^{\mathrm{k}} \times \mathcal{Y}}, x^{\mathrm{k}}(\cdot)\}$
            \State $\pi_{i} \gets \operatorname{supp}(y^{\mathrm{k + 1}}) \setminus \mathcal{Y}^{\mathrm{k}}$
            
        \Else
        \State $ x^{\mathrm{k+1}}(\theta) \gets \operatorname{BR}\{ A_{\mathcal{X} \times \mathcal{Y}^{\mathrm{k}}}, y^{\mathrm{k}}\} \text{ for all } \theta \in \Theta$
        \State $\pi_{i} \gets \left\{\bigcup\limits_{\theta \in \Theta}\operatorname{supp}(x^{\mathrm{k + 1}}(\theta))\right\} \setminus \mathcal{X}^{\mathrm{k}}$
        \EndIf
        

        \State $\Pi_i^{\mathrm{k}+1} \gets \Pi_i^\mathrm{k} \cup \{\pi_i\}$
    \EndFor

\Until{no novel best response exists for either player}

\State \Return $x^{\mathrm{k}}(\cdot), y^{\mathrm{k}}$
\end{algorithmic}
\end{algorithm}
$\operatorname{MRE}$ and $\operatorname{PR-MRE}$ are regret-based notions of equilibria that provide robustness certificates in the form of uniform performance lower bounds (see Lemmas~\ref{lemma:ulb_mre} and \ref{lemma:threat_model_perf_lb}).
\todo{MRE and PRMRE} We extend the double-oracle method to compute MRE and PR-MRE for large asymmetric information normal-form games via robust double-oracle (RDO). $\operatorname{RDO}$ (see Algorithm~\ref{alg:doubleoracle}) is based on \ref{eq:prmre_eqm_normal_form} as the meta-equilibria and \ref{eq:prmre_br_normal_form} as the best-response for policy set expansion of the uninformed player. \ref{eq:prmre_br_normal_form} provides a principled best-response that adds strategies with minimum worst-case exploitability against the omniscient opponent type-conditioned meta-equilibria. The proposed RDO expansion results in a robust population that approximates the $\operatorname{PR-MRE}$ of the underlying normal-form game and is stable under distribution shifts of the hidden type. Termination of RDO is guaranteed by the fact that the meta-equilibrium \ref{eq:prmre_eqm_normal_form} is the fixed-point of the robust expansion operator based on \ref{eq:prmre_br_normal_form} such that for the case of no new pure strategies added at an expansion iteration, the procedure terminates and the metagame approximates the $\operatorname{PR-MRE}$ support of the full normal-form game with enumeration of the entire strategy space (finite) in the worst-case.
\todo{Pseudocode; A subscript notation; define Red BR point to its definition at the beginning of Section 2 }
\section{PRMRE-PSRO: Learning Approximate Minimax-Regret Equilibria for Adversarial Team Games under Asymmetric Information}
In Section~\ref{sec:PRMRE}, we proposed a novel equilibrium concept that combines the distribution-free robustness of minimax-regret reasoning with coarse probabilistic information about types from a nominal prior. The $\operatorname{PR-MRE}$ best-response operator for the uninformed Blue (see Definitions~\ref{def:mmr} and \ref{def:exp}) computes a team best-response to the omniscient Red by minimizing worst-case regret of the Blue team strategy across subsets of the type-space weighted by coarse probabilistic information about their occurrence from the prior thus providing a robustness guarantee to strategic distribution shifts in terms of the best performance lower bound. For a finite normal-form game with asymmetric information, the $\operatorname{PR-MRE}$ best-response couples regrets across types resulting in a robust bilinear program for the $\operatorname{PR-MRE}$ equilibrium as compared to a bilinear program for the distribution-free $\operatorname{MRE}$ equilibrium and the linear-programming formulation for Bayesian Nash equilibria (see Section~\ref{sec:eqm_concepts_section} for the various equilibrium concepts and their robustness properties).

$\operatorname{PR-MRE}$ (and $\operatorname{MRE}$) does not admit a known decomposition to local CFR-style regret updates for an extensive form representation of a game with hidden types. Standard variants in the CFR literature recover approximately the Bayesian Nash equilibrium of the asymmetric information game and handle hidden information as known upto a prior (or chance). To compute robust team behavior in large adversarial team games with hidden information and potential strategic distribution shifts, we propose $\operatorname{PRMRE-PSRO}$ based on the robust double oracle procedure $\operatorname{RDO}$ for learning approximate $\operatorname{PR-MRE}$ for ATGs via reinforcement learning based best-responses. 

Algorithm~\ref{alg:team-psro-mm} details the pseudocode for $\operatorname{PR-MRE}$ Team $\operatorname{PSRO}$. For the Red team (omniscient) and Blue team (uninformed) policy sets $\Pi_{\mathrm{R}}^{\mathrm{k}}$ and $\Pi_{\mathrm{B}}^{\mathrm{k}}$ at the $k$th iteration of the $\operatorname{PRMRE-PSRO}$ procedure, metagame payoff tensor $A_{\Pi^{\mathrm{k}}_{\mathrm{R}} \times \Pi^{\mathrm{k}}_{\mathrm{B}}} \in \mathbb{R}^{\lvert \Pi^{\mathrm{k}}_{\mathrm{R}}\rvert \times \lvert \Pi^{\mathrm{k}}_{\mathrm{B}}\rvert \times \lvert\Theta\rvert}$ is maintained with an explicit third dimension corresponding to the type-space. Maintaining a three-dimensional payoff tensor for the robust double oracle approach enables reasoning about the relative performance of policy sets across types as opposed to a two-dimensional empirical game matrix common in the PSRO literature \cite{psro_silver, slumbers_rae_psro, mcaleer2023teampsro}. For instance for Graph Capture-the-Flag, the three-dimensional payoff tensor measures performance of the Red defense and Blue attack-and-capture team policies across all possible flag hypotheses. The 3D payoff representation enables reasoning about regret across types for meta-equilibrium computation and policy set expansion.
We now detail the robust population-based approach $\operatorname{PRMRE-PSRO}$ for learning approximate minimax-regret equilibria for adversarial team games under asymmetric information as follows, 

\textbf{PRMRE PSRO} (Algorithm~\ref{alg:team-psro-mm}):
For $\operatorname{PRMRE-PSRO}$, policy set expansion proceeds by best-responding to the opponent team meta-equilibrium as computed via the robust bilinear program \ref{eq:prmre_br_normal_form} (described in Section~\ref{sec:computation}) on the empirical metagame payoff tensor and a uniform type distribution. Team best-responses are computed via cooperative multi-agent reinforcement learning such as multi-agent proximal policy optimization (MAPPO) \cite{mappo}. The omniscient Red team uses a value-based team best-response $\operatorname{BR}_{\mathrm{Val}}$ criterion that maximizes expected Red team payoff against an opponent distribution given by the Blue $\operatorname{PR-MRE}$ meta-equilibrium $y^{\mathrm{k}} \in \Delta(\lvert\Pi^{\mathrm{k}}_{\mathrm{B}}\rvert)$ at the latest $\operatorname{PSRO}$ iteration.

The uninformed Blue team best-responds to a uniform type distribution and the type-conditioned Red $\operatorname{PR-MRE}$ meta-equilibrium $x^{\mathrm{k}}(\cdot) \in \Delta(\lvert\Pi^{\mathrm{k}}_{\mathrm{R}}\rvert)$ with a minimax-regret stopping criteria \ref{eq:br_mmr} until the robust average regret of the training policy $\pi_\mathrm{B}$ (equivalent to worst-case regret over types for $\delta=0$) falls below that of the Blue team meta-equilibrium $y^{\mathrm{k}}$ at the latest $\operatorname{PSRO}$ iteration. Type-wise regrets are tracked during Blue policy training as follows: $\varepsilon(x^{\mathrm{k}}(\cdot), \pi_{\mathrm{B}}; \theta) \approx \sup_{y^{*} \in \Delta(\lvert \Pi^{\mathrm{k}}_{\mathrm{B}} \rvert)} x^{\mathrm{k}}(\theta) A^{\mathrm{k}}(\theta) y^{*} - V(x^{\mathrm{k}}(\theta), \pi_{\mathrm{B}}; \theta)$ where $V(x^{\mathrm{k}}(\theta), \pi_{\mathrm{B}}; \theta)$ is the per-type Blue team payoff that is also tracked during training and the type-wise supremum against Red team meta-equilibrium $x^{\mathrm{k}}(\cdot)$ is approximated by restricting Blue team to the already discovered policies in the latest iteration of the metagame as $ \sup_{\pi^{*}_{\mathrm{b}} \in \Pi_{\mathrm{b}}} V(x^{\mathrm{k}}(\cdot), \pi^{*}_{\mathrm{b}}; \theta) \approx \sup_{y^{*} \in \Delta(\lvert \Pi^{\mathrm{k}}_{\mathrm{B}}\rvert)} x^{\mathrm{k}}(\theta) A^{\mathrm{k}} y^{*}$. In summary, $\operatorname{PRMRE-PSRO}$ \begin{enumerate*}[label=(\roman*)]
  \item uses a type-aware three-dimensional metagame representation
  \item to compute robust mixtures via the $\operatorname{PR-MRE}$ meta-equilibrium and
  \item a minimax-regret based stopping criteria for training Blue team policies robust to distribution shift.
\end{enumerate*}

\textbf{BNE PSRO}:
$\operatorname{BNE-PSRO}$ for a nominal prior $\bar{\mu}(\cdot) \in \Delta(\lvert\Theta\rvert)$ is defined via the risk-neutral Bayesian Nash Equilibrium (BNE) as the meta-equilibrium for a metagame $\left\{\Pi^{\mathrm{k}}_{\mathrm{R}}, \Pi^{\mathrm{k}}_{\mathrm{B}},  A_{\Pi^{\mathrm{k}}_{\mathrm{R}} \times \Pi^{\mathrm{k}}_{\mathrm{B}} \times \Theta}\right\}$ along with risk-neutral $\operatorname{BR}_{\mathrm{Val}}(., \bar{\mu}(\cdot))$ best-response computation for both Red and Blue teams. 

We now evaluate $\operatorname{PRMRE-PSRO}$ and $\operatorname{BNE-PSRO}$ on an example graph-structured ATG and compare the results. For our experiment, all Red and Blue team policies are parameterized as Graph Neural Networks (GNNs) that directly take in as input the graph-structured game state of the ATG. We note that the proposed $\operatorname{PRMRE-PSRO}$ is invariant to the choice of neural network architecture and provides a general methodology to train robust team policies to distribution shifts under hidden information.

\subsection{Graph-structured Capture-the-Flag: Experiment and Results}
\begin{experiment}
    Compare Blue team policies learned via $\operatorname{PRMRE-PSRO}$ and $\operatorname{BNE-PSRO}$ on an example graph-structured adversarial team game under asymmetric information, Graph Capture-the-Flag (Figure~\ref{wrap-fig:atg}), in terms of robustness to distribution shift of the hidden flag hypotheses.
\end{experiment}
\begin{algorithm}[t]
\caption{PR-MRE Team PSRO}
\label{alg:team-psro-mm}

\noindent\textbf{Result:} Approximate PRMRE for the Adversarial Team Game under Asymmetric Information

\noindent\textbf{Input:} $\Theta$, $\bar{\mu}$, $\delta$, Initial population $\Pi_{\mathrm{R}}^0, \Pi_{\mathrm{B}}^0$

\begin{tabbing}
\hspace{1.5em}\=\hspace{2.5em}\=\kill

\textbf{repeat} \{for $\mathrm{k}=0,1,\ldots$\} \\

\> Compute metagame payoff tensor $A_{\Pi_{\mathrm{R}}^{\mathrm{k}} \times \Pi_{\mathrm{B}}^{\mathrm{k}}} \in \mathbb{R}^{\lvert\Pi^{\mathrm{k}}_{\mathrm{R}}\rvert \times \lvert\Pi^{\mathrm{k}}_{\mathrm{B}}\rvert \times \lvert\Theta\rvert}$ with a third dimension for types \\

\> $(x^{\mathrm{k}}(\cdot), y^{\mathrm{k}})
\leftarrow$ \ref{eq:prmre_eqm_normal_form}$\{A_{\Pi_{\mathrm{R}}^{\mathrm{k}} \times \Pi_{\mathrm{B}}^{\mathrm{k}}}\}$
\\
\> \textbf{for} $m$ iterations \textbf{do} \\
\>\> Update team best response
$\pi_{\mathrm{R}}$ toward
$\mathrm{BR}_{\mathrm{Val}}(y, \operatorname{Unif}(\Theta))$ via cooperative MARL \\
\>\> Update team best response
$\pi_{\mathrm{B}}$ toward
\ref{eq:br_mmr}$(x^{\mathrm{k}}(\cdot), \operatorname{Unif}(\Theta))$ via cooperative MARL \\ 

\> $\Pi_{\mathrm{R}}^{\mathrm{k}+1}
\leftarrow \Pi_{\mathrm{R}}^\mathrm{k} \cup \{\pi_{\mathrm{R}}\}$ \\

\> $\Pi_{\mathrm{B}}^{\mathrm{k}+1}
\leftarrow \Pi_{\mathrm{B}}^\mathrm{k} \cup \{\pi_{\mathrm{B}}\}$ \\

\textbf{until} max $k$ number of iterations
 \\

\textbf{Return:} $(x^{\mathrm{k}}(\cdot), y^{\mathrm{k}})$

\end{tabbing}
\end{algorithm}
\begin{figure}[t]
\begin{tcolorbox}[
title=$\operatorname{BR}_{\mathrm{MMR}}$ Team Best Response with Minimax-Regret Stopping,
colback=white,
colframe=black!40,
colbacktitle=white,
coltitle=black
]
Track per-type regrets while training Blue team policy $\pi_{\mathrm{B}}$ as follows,
\begin{align}
    \varepsilon(x^{\mathrm{k}}(\cdot), \pi_{\mathrm{B}}; \theta) \approx \sup\limits_{y^{*} \in \Delta(\lvert \Pi^{\mathrm{k}}_{\mathrm{B}} \rvert)} x^{\mathrm{k}}(\theta) A^{\mathrm{k}} y^{*} - V(x^{\mathrm{k}}(\theta), \pi_{\mathrm{B}}; \theta) \text{ for all } \theta \in \Theta. \tag{$\operatorname{BR}_{\mathrm{MMR}}$} \label{eq:br_mmr}
\end{align}
Proceed training until the robust average regret of $\pi_{\mathrm{B}}$ falls below that of the Blue team $\operatorname{PR-MRE}$ meta-equilibrium at the latest $\operatorname{PSRO}$ iteration i.e. $ \inf\limits_{\mu \in \mathcal{S}(\bar{\mu}, \delta)} \mu^{\top} \varepsilon(\pi_{\mathrm{B}}) < \inf\limits_{\mu \in \mathcal{S}(\bar{\mu}, \delta)} \mu^{\top} \varepsilon(y^{\mathrm{k}})$.
\end{tcolorbox}
\end{figure}
For our experiment, we analyse one expansion iteration of the $\operatorname{PSRO}$ procedure to compare the robustness and behavioral properties of learned Blue team policies induced by the $\operatorname{PRMRE}$ and $\operatorname{BNE}$ meta-equilibria starting from the same seed metagame on the example Graph Capture-the-Flag instance highlighted in Figure~\ref{wrap-fig:atg}. The seed metagame for the $\operatorname{PSRO}$ runs is obtained as follows.

\textbf{Seed Metagame}: Blue team seed policies are learned by training against a Red team curriculum of progressively harder scripted flag defenders. The curriculum stages range from passive defenders to corridor sentry agents that patrol bottleneck nodes in the graph and chase any closing Blue attackers. Blue column $0$ is a $\operatorname{FH}$-1 (flag hypothesis 1, $\theta_{1}$) specialist trained against the Red curriculum on flag hypothesis 1. Blue column 1 warmstarts from the column 0 checkpoint and is fine-tuned against uniformly sampled $\operatorname{FH}$-0 and $\operatorname{FH}$-1 flag hypotheses. Columns 2 and 3 are $\operatorname{FH}$-0 ($\theta_{0}$) specialists with a similar fine-tuning stage against uniformly sampled hypotheses. Rows correspond to learned Red team GNN policies trained against intermediate Blue checkpoints produced while training the columns.



\textbf{Meta-equilibria and Expansion}:
Figures~\ref{fig:prmre_psro} and \ref{fig:bne_psro} highlight the evolution of the metagame for $\operatorname{PR-MRE}$ ($\delta = 0$) and $\operatorname{BNE}$ meta-equilibria respectively. Figure~\ref{fig:prmre_psro} describes one iteration of $\operatorname{PRMRE-PSRO}$ initialized from a $2\times4$ seed policy set for the teams as described above and the two panels correspond to the metagame states during $\operatorname{PSRO}$ iterations $0$--$1$. For each panel, the heatmap on the top corresponds to $\theta_{0}$ (Red flag on the left, $\mu(\theta_{0}) = 0.8$) and the bottom heatmap corresponds to $\theta_{1}$ (Red flag on the right, $\mu(\theta_{1}) = 0.2$). Rows represent Red team (defender) policies and columns represent Blue team (attacker) policies whereas the cells indicate the Red team numeric score (negative of the Blue team score). Red team has knowledge of the true flag location whereas Blue team has knowledge only up to the nominal distribution before reaching the information frontier as marked in Figure~\ref{wrap-fig:atg}. As shown in the graphic, the meta-equilibrium distribution over the Red team policy set conditions on the flag hypothesis such that $(\sigma( \pi^{0}_{r} \vert \theta_{0}), \sigma(\pi^{1}_{r} \vert \theta_{0})) = (0.19, 0.81)$ and $ (\sigma( \pi^{0}_{r} \vert \theta_{1}), \sigma(\pi^{1}_{r} \vert \theta_{1})) = (0.90, 0.10)$. $\operatorname{PR-MRE}$ hedges the performance of the Blue team seed policy set across both flag hypothesis (types) and computes a robust mixture (minimum worst-case exploitability) over the four seed Blue team policies as $( \pi^{0}_{b}, \pi^{1}_{b}, \pi^{2}_{b}, \pi^{3}_{b} ) = (0.23, 0, 0, 0.76)$ via the SDP relaxation provided in Section~\ref{sec:prmre_sdp_relax}. Note that even though $\pi^{0}_{b}$ has a drastic win-rate against Red for the majority type $\theta_{0}$ under the nominal distribution (first panel, top heatmap, first column), $\operatorname{PR-MRE}$ favors $\pi^{3}_{b}$ over $\pi^{1}_{b}$ that has the best worst-case Blue win-rate across both types and lower regret. 
This is in contrast to $\operatorname{BNE-PSRO}$ (Figure~\ref{fig:bne_psro}) that concentrates all mass on $\pi^{0}_{b}$, the best Blue team policy on the majority type $\theta_{0}$ (with nominal mass $\mu(\theta_{0})=0.8$) with an egregious win-rate against the minority type $\theta_{1}$. Blue team policy trained against the robust Red team mixture results in a policy $\pi^{1}_{b, \mathrm{MRE}}$ (second panel, fourth column) with better worst-case Blue team win-rate (fourth column is Blue across both top and bottom heatmaps) as opposed to the BNE-trained Blue team best-response $\pi^{1}_{b, \mathrm{BNE}}$ (see Figure~\ref{fig:bne_psro}, second panel, fourth column) that loses poorly against the seed Red policy $\pi^{1}_{r}$ (second row) for $\theta_{1}$. Figure~\ref{fig:bne_psro} describes one iteration of $\operatorname{BNE-PSRO}$ initialized from the same $2\times4$ metagame as $\operatorname{PRMRE-PSRO}$. Policies trained via $\operatorname{BNE-PSRO}$ are exploitable under distribution shifts as compared to the robust best-responses added in the $\operatorname{PRMRE-PSRO}$ iteration (see Figure~\ref{fig:sdp_vs_bayes_robustness} for the robustness experiment).

\begin{figure}[p]
  \centering
  \includegraphics[scale=0.5]{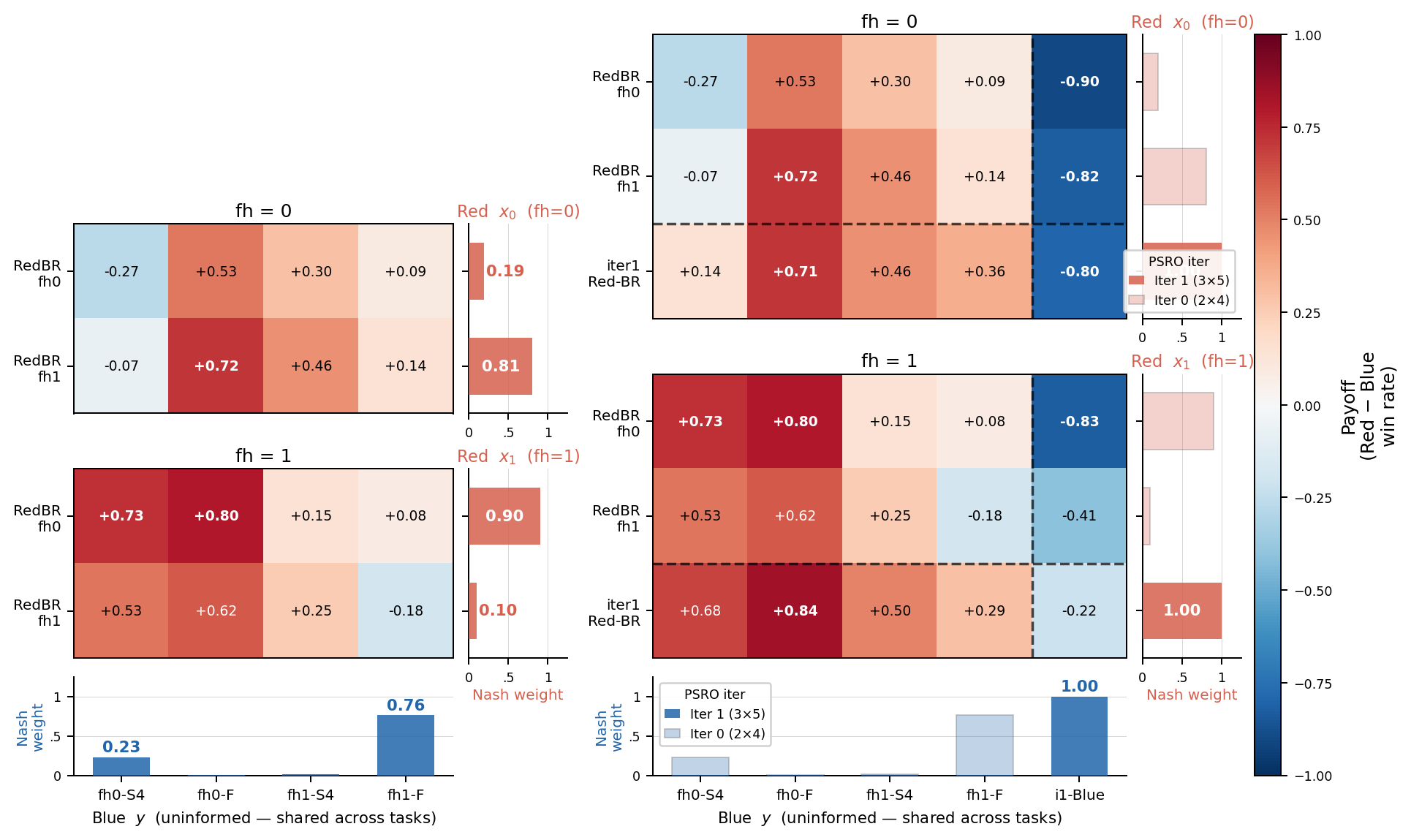}
  \caption{PRMRE-PSRO on the Graph CtF game under asymmetric information with nominal type distribution $(\mu(\theta_0),\mu(\theta_1))=(0.8,0.2)$. Rows correspond to Red-team policies and columns correspond to Blue-team policies. Cell values denote Red-team payoffs (negative values indicate Blue-team success). The three panels show the evolution of the metagame and PR-MRE meta-equilibrium over PSRO iterations. PR-MRE selects mixtures that hedge performance across both flag hypotheses rather than concentrating on policies that perform well only under the majority hypothesis. Consequently, the resulting Blue-team policies exhibit lower worst-case regret and improved robustness to type-distribution shifts.}  
  \label{fig:prmre_psro}
  \centering
  \includegraphics[scale=0.5]{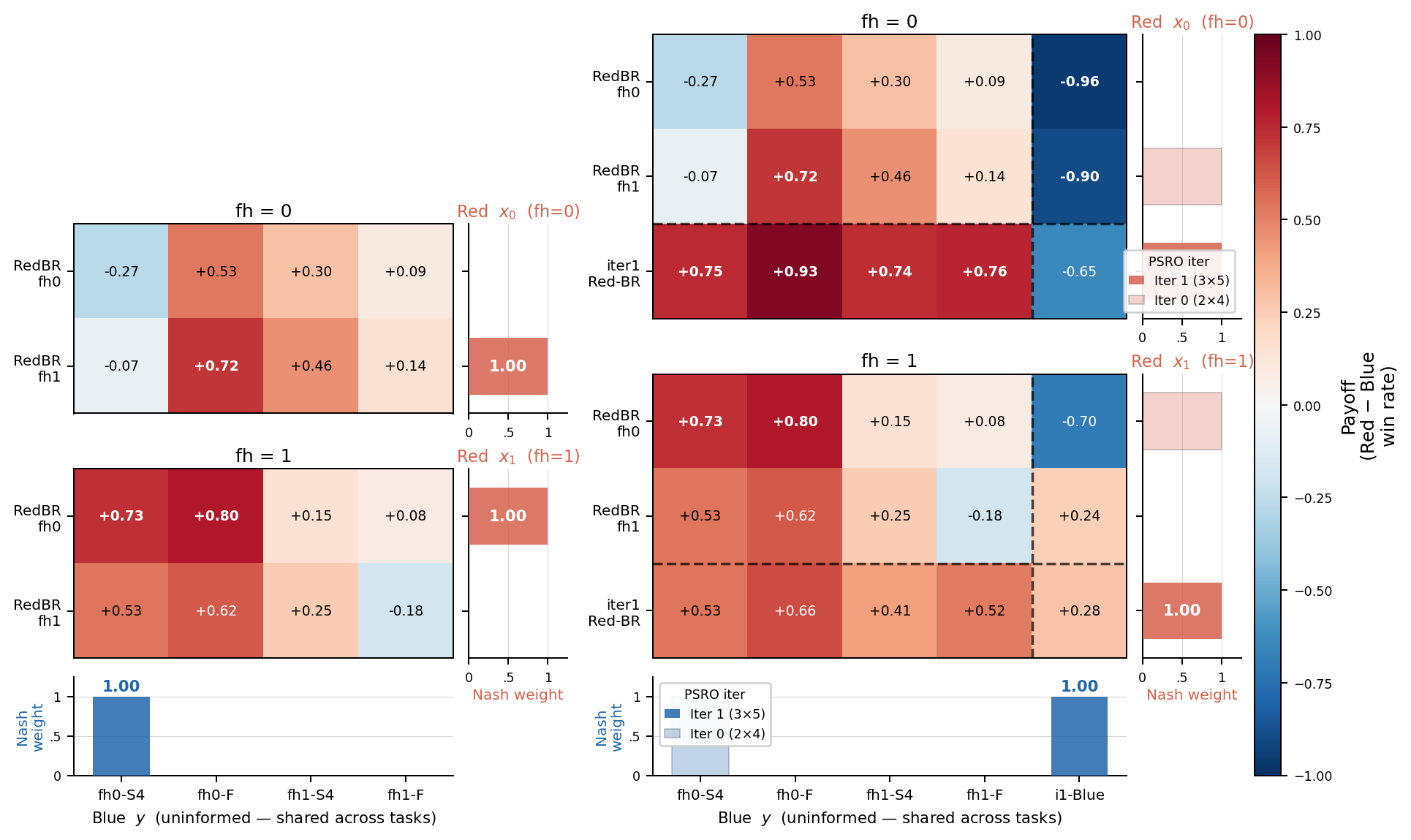}
  \caption{BNE-PSRO on the Graph CtF game under asymmetric information with nominal type distribution $(\mu(\theta_0),\mu(\theta_1))=(0.8,0.2)$. Rows, columns and cell values follow the same convention as Figure~\ref{fig:prmre_psro}. The three panels show the evolution of the metagame and BNE meta-equilibrium over PSRO iterations. Unlike PRMRE-PSRO (Figure~6), the BNE meta-equilibrium concentrates probability mass on policies that perform well under the majority flag hypothesis, resulting in strategies that are more vulnerable to type-distribution shifts.}
  \label{fig:bne_psro}
\end{figure}
\textbf{Robustness to Distribution Shift}:
The robust Blue team best-response trained against the $\operatorname{PR-MRE}$ mixture shows better robustness to type distribution shifts than the $\operatorname{BNE}$-mixture trained best-response (see Figure~\ref{fig:sdp_vs_bayes_robustness} for the robustness experiment). In contrast, the BNE meta-equilibrium (Figure 7) concentrates probability mass on majority-hypothesis specialists, resulting in policies that are more vulnerable to type-distribution shifts.

\begin{figure}[t]
  \centering
  \includegraphics[scale=0.6]{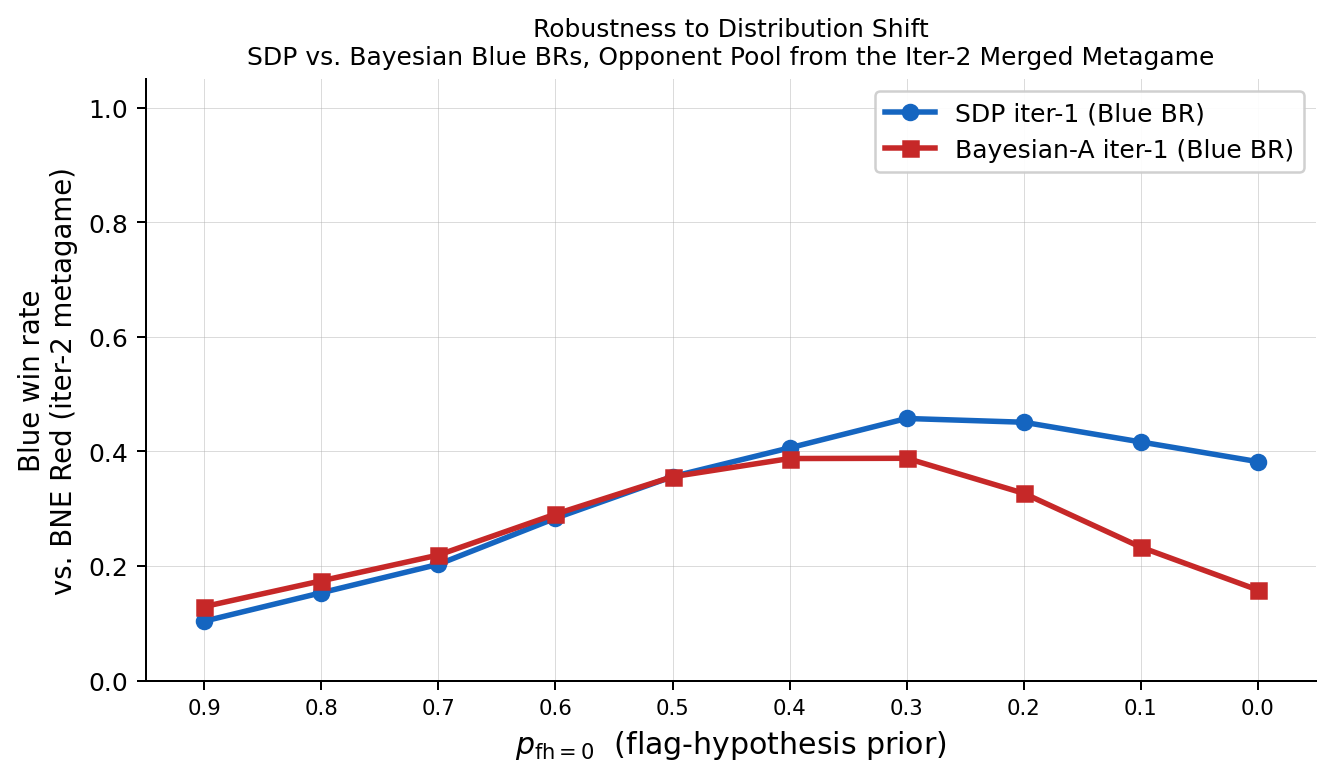}
  \caption{Performance-Robustness Curve for $\operatorname{PRMRE}-$ and $\operatorname{BNE}-$trained Blue team policies for a nominal distribution of $(\mu(\theta_{0}), \mu(\theta_{1})) = (0.8, 0.2)$. The x-axis on the above plot shows the probability mass for the hypothesis $\theta_{0}$ under the test distribution, and the y-axis denotes the expected Blue team win-rate against flags sampled from the test distribution and learned Red team expert opponent policies conditioned on the sampled flag. For a sampled flag under a test-distribution, the conditioned Red opponent is the BNE under test-distribution over the concatenated Red team policy pool from the $\operatorname{PSRO}$ runs corresponding to Figures~\ref{fig:prmre_psro} and \ref{fig:bne_psro}. From the above plot, $\operatorname{PRMRE}$-trained Blue team policies overpower the $\operatorname{BNE}$-trained team policies for $\mu(\theta_{0}) \leq 0.6$ demonstrating superior robustness to distribution shift across the type-simplex as the $\operatorname{PRMRE}$-trained team policy win-rate flattens while the $\operatorname{BNE}$-trained team policy win-rate deteriorates. Refer to Figure~\ref{fig:blue_visitation} for a behavioral interpretation of the learned Blue team policies. }
  \label{fig:sdp_vs_bayes_robustness}
\end{figure}

\begin{figure}[t]
  \centering
  \includegraphics[width=\textwidth]{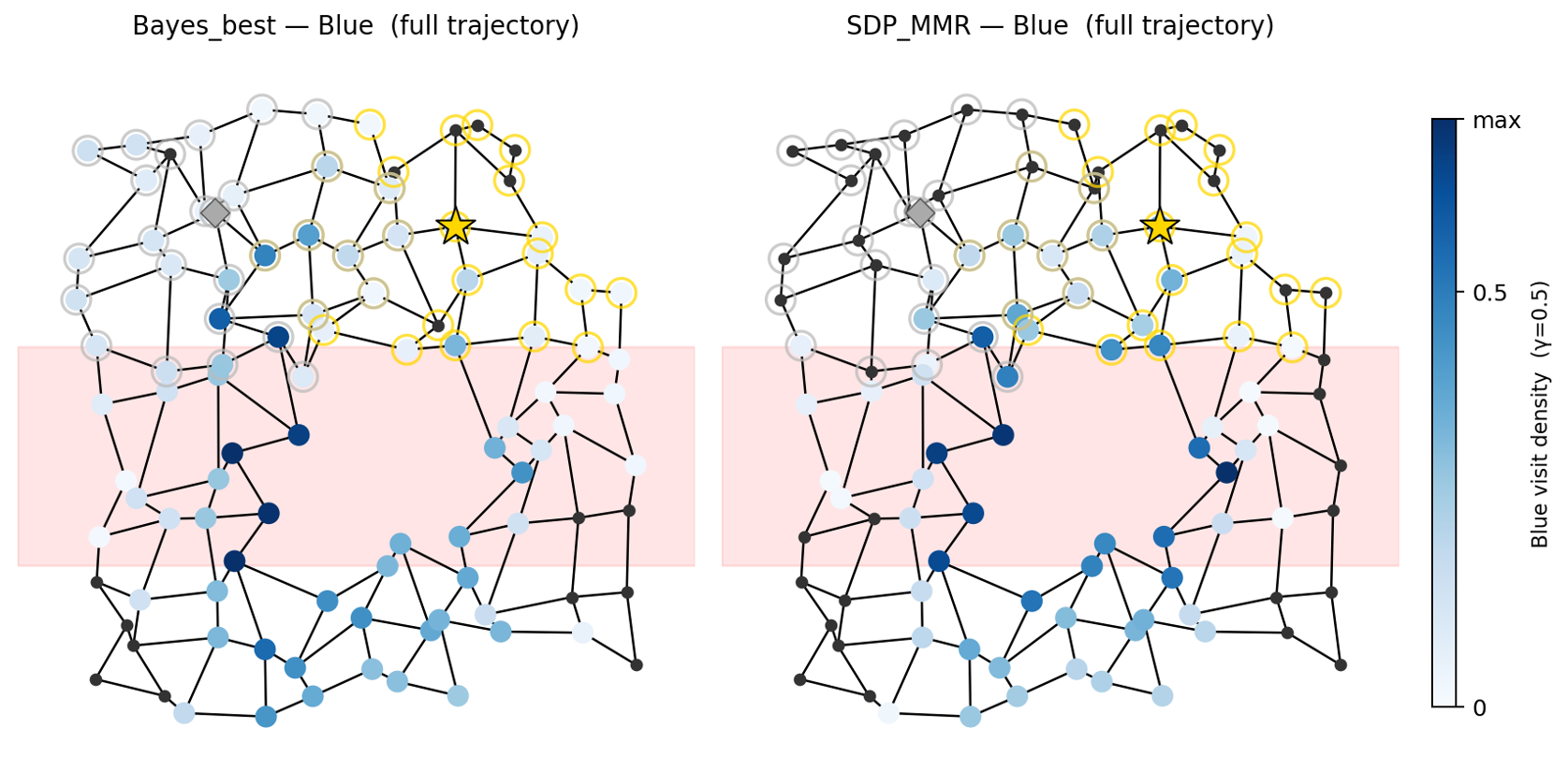}
  \caption{Behavioral comparison of BNE-PSRO (left) and PRMRE-PSRO (right) under the minority flag hypothesis. Node coloration indicates Blue-team visitation density. The BNE-trained policy is biased toward the corridor associated with the majority flag hypothesis, whereas the PRMRE-trained policy explores both corridors more symmetrically before committing to a search direction. This scouting behavior reduces exploitability across competing flag hypotheses and improves robustness to distribution shifts.}   
  \label{fig:blue_visitation}
\end{figure}

\textbf{Behavioral Analysis}: Note the robust mixtures selected by PR-MRE in the metagame manifest behaviorally as scouting policies that avoid over-committing to the majority flag hypothesis.
Figure~\ref{fig:blue_visitation} shows Blue team visitation density under learned team policies for the Graph Capture-the-Flag scenario.
The $\operatorname{PRMRE}$ Blue team policy has a symmetric visitation behavior across the two corridors under the minority flag hypothesis unlike the $\operatorname{BNE}$ trained team policy which is biased towards the corridor corresponding to the majority flag hypothesis $\theta_{0}$ suggesting that policies learned via $\operatorname{PRMRE}$ learned scouting tactics to resolve flag ambiguity before committing to a corridor leading to lower worst-case regret across the two hypothesis. A team policy that behaviorally learns to commit to a corridor (the majority flag hypothesis in this instance) will incur performance loss when the realized type is the minority type hypothesis. The lower regret behavior with symmetric visitation will be robust to distribution shifts because of learned scouting behavior as opposed to the team policy that has learned to commit to the majority flag hypothesis that will be brittle when the distribution shifts in mass to the formerly minority hypothesis (see Figure~\ref{fig:sdp_vs_bayes_robustness} for the performance-robustness curve of $\operatorname{PRMRE}$ and $\operatorname{BNE}$ under distribution shift). Moreover, in this instance of the Red opponent $\pi^{1}_{r}$ and minority type $\theta_{1}$, the $\operatorname{PRMRE}$ trained team policy has a win-rate of \textbf{+0.41} versus a win-rate of \textbf{-0.24} for the $\operatorname{BNE}$ trained team policy suggesting favorable worst-case behavior due to lower regret across types.

\section{Conclusions and Future Work}
Adversarial team games with asymmetric information are susceptible to strategic type distribution shifts in settings where the omniscient opponent has knowledge of Nature’s hidden type and potentially controls the type distribution or colludes with Nature such that it can condition its play on the Nature’s realized type. A nominal prior distribution in such strategic multi-agent interactions is “trustable” only to a certain degree in the face of potential deception. Distributionally-robust methods provide a framework to address ambiguity in the type distribution via ambiguity sets but lack guarantees for shifts outside the ambiguity set for instance strategic distribution shifts whereby Nature could be running an oracle procedure to compute a type distribution that places probability mass on high-regret types. Moreover, a worst-case approach that assumes an adversarial nature could be too conservative as it concentrates probability mass on the hardest type and loses performance under non-adversarial distribution shifts.

Motivated by the uniform lower bound interpretation of minimax reasoning, we propose probabilistically-robust minimax-regret equilibrium as a solution concept for adversarial team games under asymmetric information for robustness to strategic distribution shifts. PR-MRE best-response discounts regret over type-space subsets with coarse probabilistic information from the nominal prior distribution. This is in contrast to minimax-regret equilibrium that discards potentially useful probabilistic information about type occurrence by minimizing worst-case regret across all types. MRE is bound to lead to conservative performance on typical types when high-regret types are known to be rare for example rare flag hypothesis locations which require a remarkably different Blue team behavior for capture as compared to the typical flag hypotheses. PR-MRE is formulated as a robust bilinear program for finite normal-form games and is adapted as a meta-solver within a robust double-oracle based approach, PRMRE-PSRO to learn strategically robust team policies via reinforcement learning based best-responses. For an example graph capture-the-flag game, team strategies learned via PRMRE PSRO exhibit learned scouting behavior and enhanced robustness to distribution shift as compared to BNE PSRO.

Computation of sequential refinements of minimax-regret equilibria for treeplexes in extensive-form games (EFGs) remains an open question and a subject of future research. The nested optimization structure of MRE that characterizes deviation from type-conditioned optima and the bilinear program for MRE in normal-form games as opposed to a linear program for BNE hints at non-trivial computational complexity such that a naive approach to compute the MRE would first solve an EFG corresponding to each type. It remains an open question whether regret decomposition across infosets holds for MRE in the EFG representation as it does for the computation of BNE in the established counterfactual regret minimization literature.


\section*{Acknowledgments}
This work was supported in part by the U.S. Army Research Laboratory through the Distributed and Collaborative Intelligent Systems and Technology (DCIST) Collaborative Research Alliance under Cooperative Agreement No.~W911NF-17-2-0181. The views and conclusions contained in this document are those of the authors and should not be interpreted as representing the official policies, either expressed or implied, of the U.S. Army Research Laboratory or the U.S. Government.

\bibliography{main}

@article{lowrank_sdp_recht_parillo,
  title={Guaranteed minimum-rank solutions of linear matrix equations via nuclear norm minimization},
  author={Recht, Benjamin and Fazel, Maryam and Parrilo, Pablo A},
  journal={SIAM review},
  volume={52},
  number={3},
  pages={471--501},
  year={2010},
  publisher={SIAM}
}

@article{ali_zhang_bimatrix_nash_sdp,
author = {Ahmadi, Amir Ali and Zhang, Jeffrey},
title = {Semidefinite Programming and Nash Equilibria in Bimatrix Games},
year = {2021},
issue_date = {Spring 2021},
publisher = {INFORMS},
address = {Linthicum, MD, USA},
volume = {33},
number = {2},
issn = {1526-5528},
url = {https://doi.org/10.1287/ijoc.2020.0960},
doi = {10.1287/ijoc.2020.0960},
journal = {INFORMS J. on Computing},
month = may,
pages = {607–628},
numpages = {22}
}

@article{moment_sos1,
author = {Ahmadi, Amir Ali and Hall, Georgina},
title = {On the Construction of Converging Hierarchies for Polynomial Optimization Based on Certificates of Global Positivity},
year = {2019},
issue_date = {November 2019},
publisher = {INFORMS},
address = {Linthicum, MD, USA},
volume = {44},
number = {4},
issn = {0364-765X},
url = {https://doi.org/10.1287/moor.2018.0962},
doi = {10.1287/moor.2018.0962},
journal = {Math. Oper. Res.},
month = nov,
pages = {1192–1207},
numpages = {16}
}

@article{moment_sos2,
author = {Chen, Tong and Lasserre, Jean-Bernard and Magron, Victor and Pauwels, Edouard},
title = {A sublevel moment-SOS hierarchy for polynomial optimization},
year = {2022},
issue_date = {Jan 2022},
publisher = {Kluwer Academic Publishers},
address = {USA},
volume = {81},
number = {1},
issn = {0926-6003},
url = {https://doi.org/10.1007/s10589-021-00325-z},
doi = {10.1007/s10589-021-00325-z},
journal = {Comput. Optim. Appl.},
month = jan,
pages = {31–66},
numpages = {36}
}

@phdthesis{fazel2002matrix,
  title={Matrix rank minimization with applications},
  author={Fazel, Maryam}
}

@inproceedings{psro_silver,
author = {Lanctot, Marc and Zambaldi, Vinicius and Gruslys, Audr\={u}nas and Lazaridou, Angeliki and Tuyls, Karl and P\'{e}rolat, Julien and Silver, David and Graepel, Thore},
title = {A unified game-theoretic approach to multiagent reinforcement learning},
year = {2017},
isbn = {9781510860964},
publisher = {Curran Associates Inc.},
address = {Red Hook, NY, USA},
booktitle = {Proceedings of the 31st International Conference on Neural Information Processing Systems},
pages = {4193–4206},
numpages = {14},
location = {Long Beach, California, USA},
series = {NIPS'17}
}

@article{harsanyi_1,
author = {Harsanyi, John C.},
title = {Games with Incomplete Information Played by “Bayesian” Players, I–III Part I. The Basic Model},
journal = {Management Science},
volume = {14},
number = {3},
pages = {159-182},
year = {1967},
doi = {10.1287/mnsc.14.3.159},

URL = { 
    
        https://doi.org/10.1287/mnsc.14.3.159
    
    

},
eprint = { 
    
        https://doi.org/10.1287/mnsc.14.3.159
    
    

}
}

@article{harsanyi_2,
 ISSN = {00251909, 15265501},
 URL = {http://www.jstor.org/stable/2628673},
 author = {John C. Harsanyi},
 journal = {Management Science},
 number = {5},
 pages = {320--334},
 publisher = {INFORMS},
 title = {Games with Incomplete Information Played by "Bayesian" Players, I-III. Part II. Bayesian Equilibrium Points},
 urldate = {2025-10-09},
 volume = {14},
 year = {1968}
}

@inproceedings{mre, author = {Hyafil, Nathanael and Boutilier, Craig}, title = {Regret minimizing equilibria and mechanisms for games with strict type uncertainty}, year = {2004}, isbn = {0974903906}, publisher = {AUAI Press}, address = {Arlington, Virginia, USA}, booktitle = {Proceedings of the 20th Conference on Uncertainty in Artificial Intelligence}, pages = {268–277}, numpages = {10}, location = {Banff, Canada}, series = {UAI '04} }

@article{ouyang2016dynamic,
  title={Dynamic games with asymmetric information: Common information based perfect bayesian equilibria and sequential decomposition},
  author={Ouyang, Yi and Tavafoghi, Hamidreza and Teneketzis, Demosthenis},
  journal={IEEE Transactions on Automatic Control},
  volume={62},
  number={1},
  pages={222--237},
  year={2016},
  publisher={IEEE}
}

@misc{konicki2026computingperfectbayesianequilibria,
      title={Computing Perfect Bayesian Equilibria, with Application to Empirical Game-Theoretic Analysis}, 
      author={Christine Konicki and Mithun Chakraborty and Michael P. Wellman},
      year={2026},
      eprint={2602.15233},
      archivePrefix={arXiv},
      primaryClass={cs.GT},
      url={https://arxiv.org/abs/2602.15233}, 
}

@inproceedings{treeplexes_efg_kroer,
author = {Chakrabarti, Darshan and Grand-Cl\'{e}ment, Julien and Kroer, Christian},
title = {Extensive-form game solving via blackwell approachability on treeplexes},
year = {2024},
isbn = {9798331314385},
publisher = {Curran Associates Inc.},
address = {Red Hook, NY, USA},
booktitle = {Proceedings of the 38th International Conference on Neural Information Processing Systems},
articleno = {1111},
numpages = {31},
location = {Vancouver, BC, Canada},
series = {NIPS '24}
}

@misc{carminati2022marriageadversarialteamgames,
      title={A Marriage between Adversarial Team Games and 2-player Games: Enabling Abstractions, No-regret Learning, and Subgame Solving}, 
      author={Luca Carminati and Federico Cacciamani and Marco Ciccone and Nicola Gatti},
      year={2022},
      eprint={2206.09161},
      archivePrefix={arXiv},
      primaryClass={cs.GT},
      url={https://arxiv.org/abs/2206.09161}, 
}

@inproceedings{tme_cor_psro,
author = {McAleer, Stephen and Farina, Gabriele and Zhou, Gaoyue and Wang, Mingzhi and Yang, Yaodong and Sandholm, Tuomas},
title = {Team-PSRO for learning approximate TMECor in large team games via cooperative reinforcement learning},
year = {2023},
publisher = {Curran Associates Inc.},
address = {Red Hook, NY, USA},
booktitle = {Proceedings of the 37th International Conference on Neural Information Processing Systems},
articleno = {1967},
numpages = {17},
location = {New Orleans, LA, USA},
series = {NIPS '23}
}

@inproceedings{sokkota2026,
    title = {Reevaluating Policy Gradient Methods for Imperfect-Information Games},
    author = {Max Rudolph and Nathan Lichtlé and Sobhan Mohammadpour and Alexandre M. Bayen and J. Zico Kolter and Amy Zhang and Gabriele Farina and Eugene Vinitsky and Samuel Sokota},
    booktitle = {International Conference on Learning Representations (ICLR)},
    year = {2026},
}

@article{mappo,
  title={The surprising effectiveness of ppo in cooperative multi-agent games},
  author={Yu, Chao and Velu, Akash and Vinitsky, Eugene and Gao, Jiaxuan and Wang, Yu and Bayen, Alexandre and Wu, Yi},
  journal={Advances in neural information processing systems},
  volume={35},
  pages={24611--24624},
  year={2022}
}

@inproceedings{mcmahan2003planning,
  title={Planning in the presence of cost functions controlled by an adversary},
  author={McMahan, H Brendan and Gordon, Geoffrey J and Blum, Avrim},
  booktitle={Proceedings of the 20th international conference on machine learning (ICML-03)},
  pages={536--543},
  year={2003}
}

@article{renou2010minimax,
  title={Minimax regret and strategic uncertainty},
  author={Renou, Ludovic and Schlag, Karl H},
  journal={Journal of Economic Theory},
  volume={145},
  number={1},
  pages={264--286},
  year={2010},
  publisher={Elsevier}
}

@article{renou2011implementation,
  author    = {Renou, Ludovic and Schlag, Karl H.},
  title     = {Implementation in minimax regret equilibrium},
  journal   = {Games and Economic Behavior},
  volume    = {71},
  number    = {2},
  pages     = {527--533},
  year      = {2011},
  issn      = {0899-8256},
  doi       = {10.1016/j.geb.2010.05.010},
  url       = {https://doi.org/10.1016/j.geb.2010.05.010}
}

@article{anagnostides2026algorithms,
  author     = {Anagnostides, Ioannis and Kalogiannis, Fivos and Panageas, Ioannis and Vlatakis-Gkaragkounis, Emmanouil-Vasileios and Mcaleer, Stephen},
  title     = {Algorithms and complexity for computing Nash equilibria in adversarial team games},
  journal   = {Games and Economic Behavior},
  volume    = {157},
  pages     = {138--152},
  year      = {2026},
  doi       = {10.1016/j.geb.2026.01.006}
}

@inproceedings{atgs_neurips_2024,
author = {Kalogiannis, Fivos and Yan, Jingming and Panageas, Ioannis},
title = {Learning equilibria in adversarial team Markov games: a nonconvex-hidden-concave min-max optimization problem},
year = {2024},
isbn = {9798331314385},
publisher = {Curran Associates Inc.},
address = {Red Hook, NY, USA},
booktitle = {Proceedings of the 38th International Conference on Neural Information Processing Systems},
articleno = {2948},
numpages = {59},
location = {Vancouver, BC, Canada},
series = {NIPS '24}
}

@article{xiao2026solving,
  author     = {Xiao, Jinheng and Qiu, Chen and Xu, Yingying and Zhang, Jiajia and Qi, Shuhan and Wang, Xuan},
  title     = {Solving equilibrium for adversarial team games utilizing fictitious team play with refined team plans},
  journal   = {Expert Systems with Applications},
  volume    = {298},
  part      = {B},
  pages     = {129496},
  year      = {2026},
  issn      = {0957-4174},
  doi       = {10.1016/j.eswa.2025.129496},
  url       = {https://www.sciencedirect.com/science/article/pii/S0957417425031112}
}

@InProceedings{pmlr-v235-zhang24b,
  title = 	 {{DAG}-Based Column Generation for Adversarial Team Games},
  author =       {Zhang, Youzhi and An, Bo and Zeng, Daniel Dajun},
  booktitle = 	 {Proceedings of the 41st International Conference on Machine Learning},
  pages = 	 {58563--58578},
  year = 	 {2024},
  editor = 	 {Salakhutdinov, Ruslan and Kolter, Zico and Heller, Katherine and Weller, Adrian and Oliver, Nuria and Scarlett, Jonathan and Berkenkamp, Felix},
  volume = 	 {235},
  series = 	 {Proceedings of Machine Learning Research},
  month = 	 {21--27 Jul},
  publisher =    {PMLR},
  url = 	 {https://proceedings.mlr.press/v235/zhang24b.html}
}

@ARTICLE{Celli2018-yn,
  title     = "Computational results for extensive-form adversarial team games",
  author    = "Celli, Andrea and Gatti, Nicola",
  journal   = "Proc. Conf. AAAI Artif. Intell.",
  publisher = "Association for the Advancement of Artificial Intelligence
               (AAAI)",
  volume    =  32,
  number    =  1,
  month     =  apr,
  year      =  2018
}

@InProceedings{pmlr-v162-carminati22a,
  title = 	 {A Marriage between Adversarial Team Games and 2-player Games: Enabling Abstractions, No-regret Learning, and Subgame Solving},
  author =       {Carminati, Luca and Cacciamani, Federico and Ciccone, Marco and Gatti, Nicola},
  booktitle = 	 {Proceedings of the 39th International Conference on Machine Learning},
  pages = 	 {2638--2657},
  year = 	 {2022},
  editor = 	 {Chaudhuri, Kamalika and Jegelka, Stefanie and Song, Le and Szepesvari, Csaba and Niu, Gang and Sabato, Sivan},
  volume = 	 {162},
  series = 	 {Proceedings of Machine Learning Research},
  month = 	 {17--23 Jul},
  publisher =    {PMLR},
  url = 	 {https://proceedings.mlr.press/v162/carminati22a.html}
}

@inproceedings{
anagnostides2023efficiently,
title={Efficiently Computing Nash Equilibria in Adversarial Team Markov Games},
author={Ioannis Anagnostides and Vaggos Chatziafratis and Fivos Kalogiannis and Manolis Vlatakis and Stelios Stavroulakis},
booktitle={International Conference on Learning Representations (ICLR)},
year={2023},
url={https://openreview.net/forum?id=mjzm6btqgV}
}

@article{zinkevich2007regret,
  title={Regret minimization in games with incomplete information},
  author={Zinkevich, Martin and Johanson, Michael and Bowling, Michael and Piccione, Carmelo},
  journal={Advances in neural information processing systems},
  volume={20},
  year={2007}
}

@inproceedings{slumbers_rae_psro,
author = {Slumbers, Oliver and Mguni, David Henry and Blumberg, Stefano B. and McAleer, Stephen and Yang, Yaodong and Wang, Jun},
title = {A game-theoretic framework for managing risk in multi-agent systems},
year = {2023},
publisher = {JMLR.org},
booktitle = {Proceedings of the 40th International Conference on Machine Learning},
articleno = {1329},
numpages = {29},
location = {Honolulu, Hawaii, USA},
series = {ICML'23}
}

@inproceedings{mcaleer2023teampsro,
  title={Team-PSRO for Learning Approximate TMECor in Large Team Games via Cooperative Reinforcement Learning},
  author={McAleer, Stephen and Farina, Gabriele and Zhou, Guanzhi and Wang, Mingzhi and Yang, Yining and Sandholm, Tuomas},
  booktitle={Advances in Neural Information Processing Systems},
  volume={36},
  pages={56872--56897},
  year={2023}
}
\bibliographystyle{tmlr}

\appendix
\section{Appendix}
\label{sec:appendix}
\subsection{Robustness Certificates for MRE and PR-MRE Best-response Operators}
\label{sec:appendix_proofs}
\begin{proof}[\textbf{Proof of Lemma~\ref{lemma:ulb_mre}}]
\begin{align}\label{eq:mre_perf_lb_proof}
    \mathbb{E}_{\theta \sim \mu}[V(\pi_{r}(\cdot), \pi_{b}; \theta] &= \sum_{\theta \in \Theta} \mu(\theta) V(\pi_{r}(\cdot), \pi_{b}; \theta) = \sum_{\theta \in \Theta} \mu(\theta) \left[ \sup_{\pi^{*} \in \Pi_{b}} V(\pi_{r}(\cdot), \pi^{*}; \theta) - \mathcal{E}(\pi_{r}(\cdot), \pi_{b}; \theta)  \right], \\
    &= \sum_{\theta \in \Theta} \mu(\theta)  \sup_{\pi^{*} \in \Pi_{b}} V(\pi_{r}(\cdot), \pi^{*}; \theta) - \sum_{\theta \in \Theta} \mu(\theta) \mathcal{E}(\pi_{r}(\cdot), \pi_{b}; \theta), \\ 
    &\geq \sum_{\theta \in \Theta} \mu(\theta)  \sup_{\pi^{*} \in \Pi_{b}} V(\pi_{r}(\cdot), \pi^{*}; \theta) - \sup\limits_{q  \in \Delta(\lvert\Theta\rvert)} \underbrace{\left(\sum_{\theta \in \Theta} q(\theta) \mathcal{E}(\pi_{r}(\cdot), \pi_{b}; \theta) \right)}_{ \bar{\mathcal{E}}(\pi_{b}, q)}, \\
    &= \underbrace{\sum_{\theta \in \Theta} \mu(\theta)  \sup_{\pi^{*} \in \Pi_{b}} V(\pi_{r}(\cdot), \pi^{*}; \theta) - \sup\limits_{\theta \in \Theta} \mathcal{E}(\pi_{r}(\cdot), \pi_{b}; \theta)}_{\mathcal{V}^{(-)}(\pi_{b}, \mu)}.
\end{align}
Therefore, $\mathbb{E}_{\theta \sim \mu}[V(\pi_{r}(\cdot), \pi_{b}; \theta] \geq \mathcal{V}^{(-)}(\pi_{b}, \mu) =  \sum\limits_{\theta \in \Theta} \mu(\theta) \sup\limits_{\pi^{*} \in \Pi_{\mathrm{b}}} V(\pi_{r}, \pi^{*}; \theta) - \mathcal{E}(\pi_{r}(\cdot), \pi_{b}, \Theta)$ for all $\pi_{b} \in \Pi_{b}$ and $\mu(\cdot) \in \Delta(\lvert\Theta\rvert)$. Moreover, the lower bound is tight such that equality holds for $q = e_{\operatorname{argsup}\limits_{\theta \in \Theta} \mathcal{E}(\pi_{r}(\cdot), \pi_{b}; \theta)}$ where $e_{k}$ is the $k$-th canonical vector in $\mathbb{R}^{\lvert\Theta\rvert}$.
\end{proof}
\begin{proof}[\textbf{Proof of Lemma~\ref{lemma:prmre_rob_property}}]
We recall that,
\begin{align}
\mathcal{V}_{\mathcal{S}}^{(-)}( \pi, \mu ) = \sum\limits_{\theta \in \Theta} \mu(\theta) \sup\limits_{\pi^{*} \in \Pi_{\mathrm{b}}} V(\pi_{r}, \pi^{*}; \theta) - \sup\limits_{q \in \mathcal{S}(\bar{\mu}, \delta)} \bar{\mathcal{E}}(\pi, q),
\end{align}
where $\bar{\mathcal{E}}(\pi, q) = \sum\limits_{\theta \in \Theta} q(\theta) \mathcal{E}(\pi_{r}(\cdot), \pi; \theta)$. Now,
    \begin{align}
    \mathcal{V}_{\mathcal{S}}^{(-)}( \pi_{\mathrm{PRMRE}}, \mu ) &= \sum\limits_{\theta \in \Theta} \mu(\theta) \sup\limits_{\pi^{*} \in \Pi_{\mathrm{b}}} V(\pi_{r}, \pi^{*}; \theta) - \sup\limits_{q \in \mathcal{S}(\bar{\mu}, \delta)} \bar{\mathcal{E}}(\pi_{\mathrm{PRMRE}}, q), \\
    &= \sum\limits_{\theta \in \Theta} \mu(\theta) \sup\limits_{\pi^{*} \in \Pi_{\mathrm{b}}} V(\pi_{r}, \pi^{*}; \theta) - \inf\limits_{\pi' \in \Pi_{b}}\sup\limits_{q \in \mathcal{S}(\bar{\mu}, \delta)} \bar{\mathcal{E}}(\pi' , q), \\
    &\geq \sum\limits_{\theta \in \Theta} \mu(\theta) \sup\limits_{\pi^{*} \in \Pi_{\mathrm{b}}} V(\pi_{r}, \pi^{*}; \theta) - \sup\limits_{q \in \mathcal{S}(\bar{\mu}, \delta)} \bar{\mathcal{E}}(\pi'' , q), \\ &= \mathcal{V}_{\mathcal{S}}^{(-)}(\pi'', \mu),
    \end{align}
for all $ \pi'' \in \Pi_{b}$. The same holds for $\pi_{\mathrm{MRE}} \in \arginf\limits_{\pi \in \Pi_{b}} \sup\limits_{\theta \in \Theta} \mathcal{E}(\pi_{r}(\cdot), \pi; \theta)$ such that $\mathcal{V}_{\mathcal{S}}^{(-)}(\pi_{\mathrm{PRMRE}}, \mu) \geq \mathcal{V}_{\mathcal{S}}^{(-)}(\pi_{\mathrm{MRE}}, \mu)$. Hence proved that $\pi_{\mathrm{PRMRE}}$ provides the tightest lower bound on expected performance for type distributions restricted to the threat model.
\end{proof}

\subsection{Rank-constrained Semidefinite Reformulation and Convex Relaxation}
\label{sec:prmre_sdp_relax}
In this section, we lift the robust bilinear program \ref{eq:prmre_eqm_normal_form} to a semidefinite program (SDP) by reformulating the constraints via an introduced matrix positive semidefinite (psd) variable. Consider a symmetric psd matrix $Z(\theta) \in \mathbb{S}^{+}_{\lvert\mathcal{X}\rvert+\lvert\mathcal{Y}\rvert+1}$ such that,
\begin{align}\label{eq:Z_theta_def}
    Z(\theta) = \left[\begin{array}{lll} 1 & x(\theta)^{\top} & y^{\top} \\ x(\theta) & X(\theta) & W(\theta)^{\top} \\ y & W(\theta) & Y\end{array}\right] \succeq 0,
\end{align}
with the first row and column equal to the concatenated $x$ and $y$ vectors lying in the respective probability simplices and block matrices $X$ and $Y$ at the diagonals and an off-diagonal block matrix $W$. We follow a zero-based indexing convention to refer to the rows and columns of $Z(\theta)$ such that $Z_{1,1}(\theta) = X(\theta)$, $Z_{2,1}(\theta) = W$ and so on. Consider a reformulation of constraints (\ref{eq:y_switch})--(\ref{eq:x_switch}) with respect to the block matrices of the psd matrix $Z(\theta)$,
\begin{align}
    x(\theta)A(\theta)e_{j} - \langle A(\theta), Z_{2,1}(\theta)  \rangle \leq \{ \lambda^{\top} \mathcal{A}_{\mathscr{L}} &- \lambda^{\top}_{\mathrm{p}} + \lambda_{\mathrm{u}}\mathbf{1} \}_{\theta} \text{ for all } j \in \left[\lvert\mathcal{Y}\rvert\right] \text{ and } \theta \in \Theta,
\end{align}
and $- \langle A(\theta), Z_{2,1}(\theta)  \rangle \geq -e_{k} A(\theta) y \text{ for all } k \in [\lvert\mathcal{X}\rvert] \text{ and } \theta \in \Theta$ where $\langle\cdot,\cdot\rangle$ is the trace operator such that $\langle A, B \rangle = \operatorname{trace}(AB)$.

The off-diagonal block matrix $Z_{2,1}(\theta) = W(\theta)$ has the interpretation of capturing the cross-terms between the $x(\theta)$ and $y$ vectors such that $W(\theta) \approx yx(\theta)^{\top}$. This enables us to lift the bilinear term $x(\theta)^{\top} A(\theta) y$ and express it in terms of a block matrix of a positive semidefinite matrix variable as, $x(\theta)^{\top} A(\theta) y = \langle A(\theta), yx(\theta)^{\top} \rangle \approx \langle A(\theta), W(\theta) \rangle$. This is an instance of lifting a polynomial program into an SDP by the introduction of psd matrix variables (see \cite{moment_sos1, moment_sos2} for a survey on Moment-SoS hierarchy for polynomial optimization). Any rank-1 solution $\{Z(\theta_{i})\}$ to the lifted program recovers the exact $\operatorname{PR-MRE}$ fixed-point solution $\{ \mathbf{x}(\cdot), y \}$ to \ref{eq:prmre_br_normal_form}.
This follows from the eigen decomposition of a rank-1 positive semidefinite matrix and we omit a formal proof. See Proposition~2.1 in Ref.~\cite{ali_zhang_bimatrix_nash_sdp} for a related result on Nash equilibria in a bimatrix game with perfect information.

\textbf{Remark on Asymmetric Information between Red and Blue Teams}: Note that $Z_{2,0}(\theta) = y \text{ for all } \ \theta \in \Theta$ from the definition (\ref{eq:Z_theta_def}) of $Z(\theta)$. This is the robust Blue decision $y$ under asymmetric information whereas Red can condition its play on the hidden type as reflected by $x(\theta)$ in $Z(\theta)$. Therefore, psd matrices $Z(\theta)$ are constrained to be consistent in their $y$ decision for all types $\theta \in \Theta$ and this is added as an additional consistency constraint in the lifted program.

\textbf{Convex Relaxation}: 
Rank-constrained SDPs are an important class of non-convex problems that are generally NP-hard to solve \cite{lowrank_sdp_recht_parillo}. The rank constraint appears because of the cross-terms in the bilinear program that was lifted via the use of matrix PSD variables. We relax the rank constraint into a convex semidefinite program for our application using McCormick cuts.
\textit{McCormick Cuts and Valid Inequalities}: We begin our exposition by stating the following proposition that characterizes the desired rank-one correlations between the lifted matrix variables and the original decision variables.
\begin{proposition}\label{prop:cross_term}
    If $\operatorname{rank}(Z(\theta)) = 1$, $X(\theta) = x(\theta)x^{\top}(\theta)$, $Y = yy^{\top}$ and $W(\theta) = yx^{\top}(\theta)$.
\end{proposition}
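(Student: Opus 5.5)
The plan is to use the fact that a rank-one positive semidefinite matrix factors as an outer product of a single vector, and then to read off the blocks using the fixed top-left entry $Z_{0,0}(\theta)=1$. Since $Z(\theta)\succeq 0$ and $\operatorname{rank}(Z(\theta))=1$, the spectral decomposition gives $Z(\theta)=\sigma v v^{\top}$, where $\sigma>0$ is the unique nonzero eigenvalue and $v$ is a unit eigenvector. Set $u=\sqrt{\sigma}\,v\in\mathbb{R}^{1+\lvert\mathcal{X}\rvert+\lvert\mathcal{Y}\rvert}$, so that $Z(\theta)=uu^{\top}$. Then partition $u=(u_0,u_x,u_y)$ conformally with the blocks in (\ref{eq:Z_theta_def}).

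The second step identifies $u$ using the first row and column of $Z(\theta)$. From $Z_{0,0}(\theta)=u_0^2=1$ we get $u_0=\pm1$. Because $uu^{\top}=(-u)(-u)^{\top}$, we may replace $u$ by $-u$ and assume $u_0=1$ without loss of generality. The first column of $uu^{\top}$ is $u_0 u=u$, and by (\ref{eq:Z_theta_def}) that column equals $(1,x(\theta),y)$. Hence $u_x=x(\theta)$ and $u_y=y$. This sign normalization is the only delicate point: one must check that the factor is determined up to sign and that fixing $u_0=1$ pins it down. The argument also needs $u_0\neq 0$, which the constraint $Z_{0,0}(\theta)=1$ guarantees.

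The last step reads off the remaining blocks of $uu^{\top}$. This gives $X(\theta)=u_xu_x^{\top}=x(\theta)x(\theta)^{\top}$ and $Y=u_yu_y^{\top}=yy^{\top}$. The off-diagonal block in block row $2$, column $1$ is $u_yu_x^{\top}=yx(\theta)^{\top}$, which is $W(\theta)$ under the paper's indexing $Z_{2,1}(\theta)=W(\theta)$. Symmetry then gives $Z_{1,2}(\theta)=W(\theta)^{\top}=x(\theta)y^{\top}$, consistent with the definition.

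Finally, the same argument applies to each $\theta\in\Theta$ separately. The consistency constraint $Z_{2,0}(\theta)=y$ across types ensures that $Y=yy^{\top}$ is the same matrix for every $\theta$. Consequently $\langle A(\theta),W(\theta)\rangle=x(\theta)^{\top}A(\theta)y$, which recovers the bilinear constraints (\ref{eq:y_switch})--(\ref{eq:x_switch}) exactly.
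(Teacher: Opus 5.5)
Your proof is correct and follows the argument the paper indicates (and omits): use the eigendecomposition to write the rank-one PSD matrix as $Z(\theta)=uu^{\top}$, use $Z_{0,0}(\theta)=1$ to fix the sign so that $u=(1,x(\theta),y)$, and read off $X(\theta)$, $Y$ and $W(\theta)$ from the blocks of $uu^{\top}$. Your closing identity $\langle A(\theta),W(\theta)\rangle = x(\theta)^{\top}A(\theta)y$ is also consistent with the paper's convention $\langle A,B\rangle=\operatorname{trace}(AB)$.
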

Since we relax the rank-constraint, we add {valid inequalities} in the form of McCormick cuts to the block matrices $X(\theta)$, $Y$ and $W(\theta)$ to encode the correlation between the cross-terms as stated in Proposition~\ref{prop:cross_term}. This is a valid approach since all feasible $\operatorname{rank}$-$1$ solutions to the rank-constrained program satisfy these inequalities and is a way of pruning out high-rank solutions that do not obey the correlational structure possessed by low-rank solutions. For example, we add the following set of inequalities for $W_{i,j}(\theta)$, since it is supposed to be a proxy for $y_{i}x_{j}$ such that $W_{i,j} \approx y_{i}x_{j}$,
\begin{align}
    W_{i,j}(\theta) &\leq x_{j}(\theta), W_{i,j}(\theta) \leq y_{i}, \\
    W_{i,j}(\theta) &\geq x_{j}(\theta) + y_{i} - 1
\end{align}
for all $i \in [\lvert m \rvert]$ and $j \in [\lvert n \rvert]$. We add similar cuts for the $i,j$th terms ($i \neq j$) of the block-matrices $X(\theta)$ and $Y$ which are supposed to be proxies for $x_{i}x_{j}$ and $y_{i}y_{j}$ respectively. For the diagonal terms of the $X(\theta)$ and $Y$ blocks, we enforce constraints of the form $0 \leq X_{ii}(\theta) \leq x_{i}(\theta)$ and $0 \leq Y_{ii} \leq y_{i}(\theta)$ which follow from $x(\theta)$ and $y$ belonging to probability simplices.

The solution $\{Z(\theta_{i})\}$ obtained from the relaxed program via the valid inequalities provided by McCormick cuts is not guaranteed to be $\operatorname{rank}$-1 and the relaxation might not be lossless. Therefore, as part of our method, we project the obtained solution for $Z(\theta)$'s on the manifold of $\operatorname{rank}$-1 matrices. From the classical Eckart–Young–Mirsky theorem, the optimal $\operatorname{rank}$-$r$ approximant of a $m\times n$ rectangular matrix $M$ is given by the largest $r$ singular values of $M$ and the corresponding eigenvectors. In our method, we obtain $\operatorname{rank}$-1 approximations for $\{Z(\theta_{i})\}$ via the largest eigenvector for each $\theta_{i}$. Let $v_{\mathrm{max}}(Z(\theta_{i})) \in \mathbb{R}^{\lvert\mathcal{X}\rvert+\lvert\mathcal{Y}\rvert+1}$ denote the eigenvector corresponding to the largest singular value of $Z(\theta_{i})$. We re-normalize $v_{\mathrm{max}}(\theta_{i})$ such that the first-entry becomes unity as $v_{\mathrm{max}}(\theta_{i}) = \left[ 1, x(\theta_{i}), y\right]^{\top}$ and obtain $x(\theta_{i})$ and $y$. Note that the $x(\theta)$ and $y$ obtained in this fashion might not lie in their respective probability simplices (due to negative entries for example) and will require an additional projection step. In our implementation, we clip the negative entries at $0$ and re-normalize such that the sum of all entries equals $1$.

\subsection{Additional Illustrative Example}
\label{sec:illus_example_appendix}
\todo{penalty as a mild punisher row; preserves specialist and generalist structure but changes the regret and payoff landscape across types.}
Consider the following $2\times5\times3$ normal-form game such that rows corresponding to red action $r_{0}$ are the same as the $1\times5\times3$ game from Table~\ref{tab:1row-game}. Red action $r_{1}$ is an additional \textit{punisher} action that punishes Blue specialists $b_{0}$, $b_{1}$ and $b_{2}$ on their specialized types respectively and also the generalists $b_{3}$, $b_{4}$ across all types hence modifying the regret and payoff landscape for the game. We note that $b_{0}$, $b_{1}$ and $b_{2}$ retain their specialist roles for $\theta_{0}$, $\theta_{1}$ and $\theta_{2}$ respectively as before as can be verified from the payoff values in the game below. Unlike the main-text example, the informed Red player now has multiple strategic choices, so that both the robust Blue best-responses and the corresponding informed Red best-responses differ across equilibrium fixed-points.
\begin{table}[h]
\caption{Example $2{\times}5{\times}3$ game. Columns $\{b_{0} \cdots b_{4}\}$ are Blue actions, rows $\{r_{0}, r_{1}\}$ correspond to Red actions and a total of three $2 \times 5$ matrices stacked one on top of another along a third type-axis for $\theta \in \{\theta_{0}, \theta_{1}, \theta_{2}\}$ with a nominal distribution $\bar{\mu} = (0.25, 0.70, 0.05)$ such that each cell shows the Blue payoff $V(r, b; \theta)$ for the $(r, b, \theta)$ matchup.
}
\label{tab:2row-game}
\centering
\[
\setlength{\arraycolsep}{6pt}
\renewcommand{\arraystretch}{1.3}
\begin{array}{r |c|c|c|c|c| l}
 \multicolumn{1}{r}{}
   & \multicolumn{1}{c}{b_0} & \multicolumn{1}{c}{b_1} & \multicolumn{1}{c}{b_2}
   & \multicolumn{1}{c}{b_3} & \multicolumn{1}{c}{b_4} & \multicolumn{1}{l}{} \\[4pt]
\cline{2-6}
 r_0 &  1.5 &  0.6 & -0.5 &  0.4 &  0.7 & \multirow{2}{*}{$\left.\rule{0pt}{4.6ex}\right\}\theta_0, 0.25$} \\
\cline{2-6}
 r_1 &  0.7 &  0.1 & -0.5 &  0.1 &  0.1 & \\
\cline{2-6}
\noalign{\vskip 7pt}
\cline{2-6}
 r_0 &  0.3 &  1.0 & -0.8 &  0.8 &  0.8 & \multirow{2}{*}{$\left.\rule{0pt}{4.6ex}\right\}\theta_1, 0.70$} \\
\cline{2-6}
 r_1 & -0.1 &  0.3 & -0.8 &  0.2 & -0.1 & \\
\cline{2-6}
\noalign{\vskip 7pt}
\cline{2-6}
 r_0 & -0.6 & -0.8 &  1.5 &  0.5 & -0.1 & \multirow{2}{*}{$\left.\rule{0pt}{4.6ex}\right\}\theta_2, 0.05$} \\
\cline{2-6}
 r_1 & -0.4 & -0.8 &  0.5 &  0.0 & -0.3 & \\
\cline{2-6}
\end{array}
\]
\end{table}

$\operatorname{BNE}$, $\operatorname{DRO-Ball}$, $\operatorname{MaxMin}$ and $\operatorname{DRO-MaxMin}$ are all computed via linear-programs as before whereas $\operatorname{MRE}$ and $\operatorname{PR-MRE}$ ($\delta = 0.15$) are computed via semidefinite relaxations to the corresponding bilinear and robust bilinear programs as derived in Section~\ref{sec:computation}. The evaluation protocol for the winner-map, payoff difference heatmaps and the metrics table is the same as before such that the computed Blue decisions are evaluated against the Red $\operatorname{BNE}$ corresponding to the test type distribution that probes over the entire type-simplex. The robust Blue decisions corresponding to the above computed fixed-points are: BNE $(0, 1, 0, 0, 0)$, DRO-Ball $(0, 0, 0, 1, 0)$, MaxMin $(0.05, 0, 0.14, 0.81, 0)$, DRO-MaxMin $(0.11, 0, 0, 0.89, 0)$, MRE $(0.42, 0, 0.29, 0.29, 0)$ and  PR-MRE $(0.56, 0, 0, 0.45, 0)$. We note that no Blue decision has $b_{4}$ in its support including $\operatorname{PR-MRE}$ in contrast to the $1\times5\times3$ game from before. This is due to the additional punisher row that penalizes $b_{4}$ more drastically than $b_{3}$ over typical types $\theta_{0}$ and $\theta_{1}$ as seen in Table~\ref{tab:2row-game}. The computed $\operatorname{MRE}$ Blue decision allocates meaningful probability mass on the rare-type specialist $b_{2}$ same as before, at the expense of catastrophic performance on the typical set $\mu(\theta_{2}) \leq 0.15$ (-0.222 worst-case and -0.022 average payoff from Figure~\ref{fig:2row_tables}), as opposed to the computed $\operatorname{PR-MRE}$ Blue that discounts regret incurred on the rare-type and hedges across the trapezoidal band representing the typical subset of the type-simplex under confidence-level $\delta = 0.15$ (see Figure~\ref{fig:2row_triangles}). From the winner-map in Figure~\ref{fig:2row_triangles}, $\operatorname{PR-MRE}$ dominates the trapezoidal band and has the lowest worst-case regret and the highest average-case payoff (0.200) for type distributions chosen at random from the structured type ambiguity set as opposed to $\operatorname{DRO-MaxMin}$ which has a higher worst-case payoff (0.135) as compared $\operatorname{PR-MRE}$ (-0.005) but loses out in the average-case. The regret-based objective of $\operatorname{PR-MRE}$ provides a performance guarantee in the form of a point-wise uniform lower bound (see Lemma~\ref{lemma:prmre_rob_property}) for all type distributions belonging to the structured ambiguity set that results in a favorable average-case performance over the trapezoidal band in the example game.
\begin{figure}[!h]
  \centering
  \includegraphics[scale=0.25]{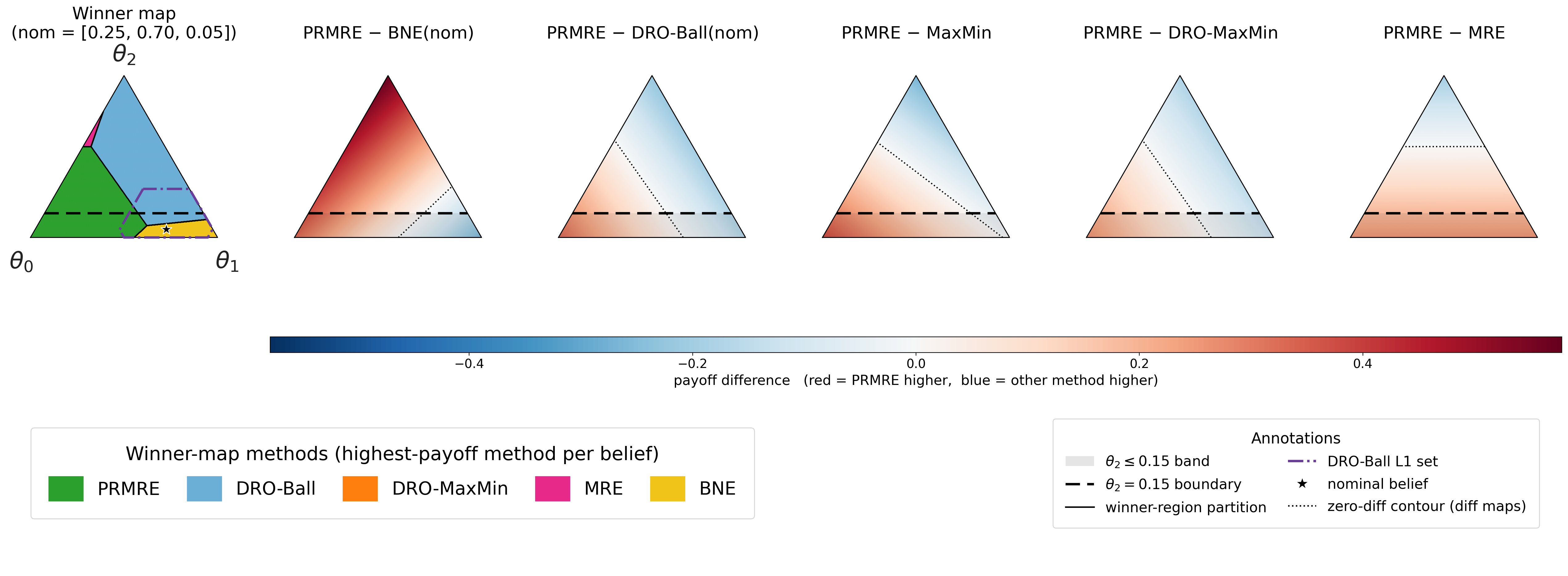}
  \vspace{-1.5em}
  \caption{Comparison of $\operatorname{PRMRE}$ ($\delta = 0.15$) with $\operatorname{BNE}$, $\operatorname{DRO-Ball}$, $\operatorname{MaxMin}$, $\operatorname{DRO-MaxMin}$ and $\operatorname{MRE}$  for the synthetic three-type game in Table~\ref{tab:2row-game}: winner map (first triangle) and payoff difference heatmaps (remaining triangles). Triangles represent three-dimensional simplex in barycentric coordinates with vertices corresponding to full probability mass on the respective types. Black dotted line and the trapezoidal band beneath it represent the high-confidence under nominal structured type-ambiguity set $\mu(\theta_{2}) \leq 0.15$, the star in the winner map corresponds to the nominal distribution $\bar{\mu} = (0.25, 0.70, 0.05)$ and the dotted hexagon centered at the star corresponds to an $L1$-ball with radius $\rho=0.5$ around it. For the trapezoidal band, $\operatorname{PR-MRE}$ dominates all other methods which can be seen from the green volume share in the winner-map and the red volume share in the payoff difference heatmaps. $\operatorname{BNE}$ and $\operatorname{DRO-Ball}$ perform better close to the nominal ($\theta_{1}$-corner, mass 0.70 under nominal) but lose out on performance to $\operatorname{PR-MRE}$ for a type distribution chosen at random within the trapezoidal band. $\operatorname{PR-MRE}$ has the best average}
  \label{fig:2row_triangles}
\end{figure}
\begin{figure}[!ht]
  \centering
  \includegraphics[scale=0.3]{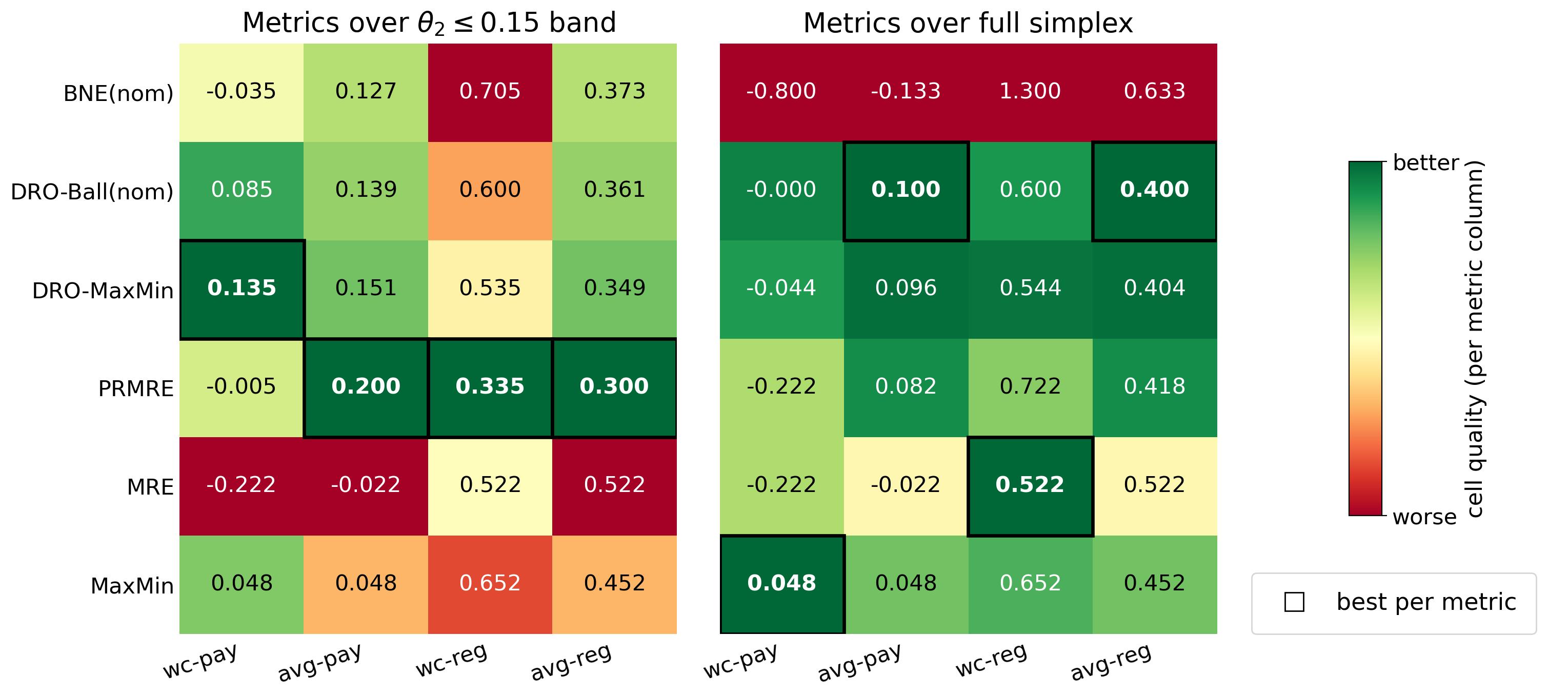}
  \caption{Comparison of $\operatorname{PRMRE}$ ($\delta = 0.15$) with $\operatorname{BNE}$, $\operatorname{DRO-Ball}$, $\operatorname{MaxMin}$, $\operatorname{DRO-MaxMin}$ and $\operatorname{MRE}$  for the synthetic three-type game in Table~\ref{tab:2row-game} with respect to average and worst-case, payoff and regret over the structured ambiguity set $\mu(\theta_{2}) \leq 0.15$ and the entire type-simplex. Over $\mu(\theta_{2}) \leq 0.15$ (trapezoidal band in Figure~\ref{fig:2row_triangles}), $\operatorname{PR-MRE}$ has the best worst-case regret as compared to any other method. $\operatorname{DRO-MaxMin}$ on the other hand beats $\operatorname{PR-MRE}$ on worst-case payoff over the $\mu(\theta_{2}) \leq 0.15$ band which is consistent with its payoff-based objective (0.135 vs -0.005) but loses average-case performance for a type distribution chosen at random from the trapezoidal band such that $\operatorname{PR-MRE}$ attains the highest average payoff (0.200) as compared to any other method (0.151 for $\operatorname{DRO-MaxMin}$). The regret-based objective of the $\operatorname{PR-MRE}$ best-response provides the best point-wise lower bound on payoff for type distributions belonging to the structured ambiguity set (as stated in Lemma~\ref{lemma:prmre_rob_property}) which in this example game manifests as a superior average-case performance for a type-distribution chosen at random from the $\mu(\theta_{2}) \leq 0.15$ band.}
  \label{fig:2row_tables}
\end{figure}

\newpage
\subsection{GNN-MAPPO PSRO Training Details}
\label{sec:psro_details_appendix}
\subsubsection{Graph Neural Network Architecture}
\begin{table}[h]
\centering\footnotesize
\caption{Node feature vector $x_v\in\mathbb{R}^{13}$ (\texttt{obs\_version}=3).
Indices $6$--$10$ are populated only on the acting agent's node.}
\label{tab:nodefeat}
\begin{tabular}{clcl}
\toprule
\# & Feature & \# & Feature \\
\midrule
0 & \texttt{is\_ego} & 7 & dist.\ to opponent flag (ego) \\
1 & \texttt{is\_frontier} & 8 & \texttt{enemy\_flag\_known} (ego) \\
2 & frontier resolution mass & 9 & min.\ enemy dist.\ to ego (ego) \\
3 & \texttt{is\_regular} & 10 & min.\ teammate dist.\ to ego (ego) \\
4 & opponent flag present here & 11 & structural (graph) degree \\
5 & hop distance to ego node & 12 & visibility bit \\
6 & dist.\ to own flag (ego) & & \\
\bottomrule
\end{tabular}
\end{table}
\subsubsection{PPO Configuration}
\begin{table}[h]
\centering\footnotesize
\caption{Policy and PPO/MAPPO training configuration. Best responses are
warm-started from a v4-hybrid checkpoint (shared MPNN transfers; per-agent
heads re-initialized).}
\label{tab:ppo}
\begin{tabular}{ll@{\hskip 2em}ll}
\toprule
Hidden width / embedding $D$ & 64 & Optimizer & Adam \\
MPNN layers $L$ & 3 & Learning rate & $2.5\times10^{-4}$ \\
Aggregation & sum (+self-loop) & PPO clip range & $0.2$ \\
Ego-subgraph radius $K$ & 6 hops & GAE $\lambda$ / $\gamma$ & $0.95$ / $0.99$ \\
Node feature dim & 13 & Rollout length $n_{\mathrm{steps}}$ & 2048$^\dagger$ \\
Team size & 2v2 & Minibatch size & 128 \\
Attention & disabled & PPO epochs & 2$^\dagger$ \\
Algorithm & MAPPO (CTDE) & Value coef.\ $c_v$ & 0.1$^\dagger$ \\
Steps per best response & $4\times10^{6}$ & Entropy coef. & 0.0 \\
\bottomrule
\end{tabular}
\end{table}

\end{document}